\providecommand{\tabularnewline}{\\}
\providecommand{\algorithmname}{Algorithm}
\let\oldforeign@language\foreign@language
\DeclareRobustCommand{\foreign@language}[1]{%
	\lowercase{\oldforeign@language{#1}}}
\let\oldforeign@language\foreign@language
\DeclareRobustCommand{\foreign@language}[1]{%
	\lowercase{\oldforeign@language{#1}}}
\newcommand{\MYfooter}{\smash{
		\hfil\parbox[t][\height][t]{\textwidth}{\centering
			\thepage}\hfil\hbox{}}}
\def\ps@IEEEtitlepagestyle{%
	\def\@oddhead{\parbox[t][\height][t]{\textwidth}{\centering \scriptsize
			Personal use of this material is permitted. Permission from the author(s) and/or copyright holder(s), must be obtained for all other uses. Please contact us and provide details if you believe this document breaches copyrights.\\
			\noindent\makebox[\linewidth]{}
		}\hfil\hbox{}}%
	\def\@evenhead{\scriptsize\thepage \hfil \leftmark\mbox{}}%
	\def\@oddfoot{\parbox[t][\height][l]{\textwidth}{
			\vspace{-20pt}{\rule{\textwidth}{0.4pt}}\\ \footnotesize\underline{To cite this article:}
			{\bf{\footnotesize\textcolor{red}{H. A. Hashim, A. E. E. Eltoukhy, and K. G. Vamvoudakis, "UWB Ranging and IMU Data Fusion: Overview and Nonlinear Stochastic Filter for Inertial Navigation," IEEE Transactions on Intelligent Transportation Systems, pp. 1-11, 2023.}}} doi: \href{https://doi.org/10.1109/TITS.2023.3309288}{10.1109/TITS.2023.3309288}\\
			\noindent\makebox[\linewidth]
		}\hfil\hbox{}}%
	\def\@evenfoot{\MYfooter}}
\newtheorem{defn}{Definition}
\newtheorem{lem}{Lemma}
\newtheorem{thm}{Theorem}
\newtheorem{assum}{Assumption}
\begin{document}
	\bstctlcite{IEEEexample:BSTcontrol}

	\title{UWB Ranging and IMU Data Fusion: Overview and Nonlinear Stochastic Filter for Inertial Navigation}

\author{Hashim A. Hashim, Abdelrahman E. E. Eltoukhy, and Kyriakos G. Vamvoudakis
	\thanks{This work was supported in part by the National Sciences and Engineering Research Council of Canada (NSERC) under the grants RGPIN-2022-04937 and by the National Science Foundation under grant Nos. S\&AS-1849264, CPS-1851588, and CPS-2038589.}
	\thanks{H. A. Hashim is with the Department of Mechanical and Aerospace Engineering, Carleton University, Ottawa, ON, K1S 5B6, Canada (e-mail: hhashim@carleton.ca).
		A. E.E. Eltoukhy is with the Department of Industrial and Systems Engineering, The Hong Kong Polytechnic University, Hung Hum, Hong Kong (e-mail: abdelrahman.eltoukhy@polyu.edu.hk).	
		K. G. Vamvoudakis is with the Daniel Guggenheim School of Aerospace Engineering, Georgia Institute of Technology, Atlanta, GA, 30332, USA (e-mail: kyriakos@gatech.edu).}
}



\maketitle

\begin{abstract}
This paper proposes a nonlinear stochastic complementary filter design
for inertial navigation that takes advantage of a fusion of Ultra-wideband
(UWB) and Inertial Measurement Unit (IMU) technology ensuring semi-global
uniform ultimate boundedness (SGUUB) of the closed loop error signals
in mean square. The proposed filter estimates the vehicle's orientation,
position, linear velocity, and noise covariance. The filter is designed
to mimic the nonlinear navigation motion kinematics and is posed on
a matrix Lie Group, the extended form of the Special Euclidean Group
$\mathbb{SE}_{2}\left(3\right)$. The Lie Group based structure of
the proposed filter provides unique and global representation avoiding
singularity (a common shortcoming of Euler angles) as well as non-uniqueness
(a common limitation of unit-quaternion). Unlike Kalman-type filters,
the proposed filter successfully addresses IMU measurement noise considering
unknown upper-bounded covariance. Although the navigation estimator
is proposed in a continuous form, the discrete version is also presented.
Moreover, the unit-quaternion implementation has been provided in
the Appendix. Experimental validation performed using a publicly available
real-world six-degrees-of-freedom (6 DoF) flight dataset obtained
from an unmanned Micro Aerial Vehicle (MAV) illustrating the robustness
of the proposed navigation technique.
\end{abstract}

\begin{IEEEkeywords}
Sensor-fusion, Inertial navigation, Ultra-wideband ranging, Inertial measurement
unit, Stochastic differential equation, Stability, Localization, Observer
design. 
\end{IEEEkeywords}

\IEEEpeerreviewmaketitle{}

\section{Introduction}

\IEEEPARstart{I}{nertial} navigation is a fundamental robotic task commonly accomplished
by fusing information from multiple sensors \cite{beauvisage2021robust,hashim2021_COMP_ENG_PRAC,zou2021comparative,hashim2022ExpVTOL,liu2022formation}.
Traditionally, outdoor robotic missions carried out by Unmanned Aerial
Vehicles (UAVs), mobile robots, and ground vehicles utilize a combination
of Global Positioning Systems (GPS) and an Inertial Measurement Unit
(IMU) to extract the navigation components, namely vehicle's orientation
(commonly known as attitude), position, and linear velocity, essential
for the success of any control missions \cite{d2018uav}. However,
GPS signal is susceptible to obstructions, multipath, fading, and/or
denial (indoor mission and in harsh weather). Given the possibility
of GPS signal loss, availability of a back-up technology for accurate
estimation of the navigation components is crucial to prevent mission
failure until GPS signal is restored. As such, the research community
has been actively seeking to address the above challenge using different
Inertial Navigation Systems (INSs). For instance, vision-based aided
navigation techniques that rely on a vision unit (monocular or stereo
camera) and an IMU have been employed \cite{hashim2021_COMP_ENG_PRAC,beauvisage2021robust,cheng2014seamless,hashim2023_ISA,hashim2021ACC}.
Other researchers have integrated a Light Detection and Ranging (LiDAR)
sensor and an IMU \cite{zou2021comparative}. Navigation components
estimation can also be enabled by fusing an Ultra-wideband (UWB) and
an IMU \cite{liu2022formation,tian2019resetting,xiong2021adaptive}.
In comparison with LiDAR and vision units, UWB and IMU fusion reduces
the cost, size, and weight of the sensing unit and the power requirements.
Moreover, UWB enables positioning whether the communication between
the tag (attached to the robot) and the fixed anchors (source) is
within the Line-of-sight (LOS) or Non-line-of-sight (NLOS). However,
the main challenge of using a combination of UWB and IMU is their
proneness to measurement uncertainties \cite{dwek2019improving,guo2019ultra,shen2010fundamental,navratil2022concurrent}.

Localization of a rigid-body (e.g., UAVs and ground vehicles) aims
to fully define attitude and position \cite{hashim2021_COMP_ENG_PRAC}.
A vehicle equipped with a 9-axis IMU (composed of a gyroscope, an
accelerometer, and a magnetometer) allows for attitude determination.
A vehicle equipped with a tag accessed by a group of fixed anchors
allows for position determination. Attitude and position determination
are normally impaired by noise, and therefore, require a robust filter
to (1) attenuate the noise effect, (2) estimate the hidden states
(the linear velocity), and (3) produce reasonable navigation estimates.
Over the last few years, several Gaussian navigation filters have
been proposed based on the fusion of UWB and IMU to achieve higher
estimation accuracy and reduce measurement noise. Examples of IMU-UWB-based
Gaussian navigation filters include a Kalman Filter (KF) that has
been developed for indoor localization systems \cite{feng2020kalman},
a tightly coupled Extended Kalman Filter (EKF) that addresses the
divergence of KFs \cite{wen2020new,strohmeier2018ultra}, and an Unscented
Kalman Filter (UKF) suited for a group of unmanned ground vehicles
that uses a set of sigma points to improve the probability distribution
\cite{liu2022formation}. In addition, Particle filters (PFs) reliant
on IMU-UWB-fusion have been introduced \cite{tian2019resetting}.
PFs do not follow the Gaussian assumption and they are typically classified
as stochastic filters \cite{hashim2018SO3Stochastic}. The main shortcoming
of KF, EKF, and UKF, is the fact that they are based on optimal minimum-energy
which is first order adopting linearization around a nominal point
\cite{hashim2018SO3Stochastic,chui2017kalman}. As such, higher order
terms are disregarded resulting in degradation of the estimation accuracy.
Furthermore, UKF sigma points add complexity to the implementation
process \cite{hashim2018SO3Stochastic}. The limitations of PFs are
two-fold: 1) they have higher computational cost (unfit for small
scale systems) \cite{hashim2018SO3Stochastic}, and 2) the stability
results are not indicative of the proximity of the solution to the
optimal one \cite{zamani2013minimum,chui2017kalman}. In addition,
the KF, EKF, UKF and PF techniques utilize Euler angles which are
subject to singularities. It is critical to consider that the navigation
kinematics of a vehicle moving in three-dimensional (3D) space are
highly nonlinear. As such, to capture the nonlinearity and address
the singularity issue of Euler angles, the navigation problem is best
modeled on the Lie Group. Moreover, the impracticality of the known
noise covariance assumption adopted by Kalman-type filters and PFs
has to be addressed \cite{hashim2018SO3Stochastic,crassidis2007survey}.

Motivated by the advantages and limitations of the above literature
discussion, this work aims to capture the complex nonlinear nature
of the navigation kinematics. Hence, the problem is modeled on the
Lie Group of the extended form of the Special Euclidean Group $\mathbb{SE}_{2}\left(3\right)$.
The navigation approach proposed in this work makes provision for
the uncertainty in IMU measurements and considers a vehicle equipped
with a UWB tag and availability of $n$ UWB anchors. Consequently,
the contributions of this work are as follows:
\begin{enumerate}
	\item[1)] A nonlinear stochastic complementary filter for inertial navigation
	developed on the Lie Group of $\mathbb{SE}_{2}\left(3\right)$ reliant
	on the direct UWB and IMU measurements is proposed. 
	\item[2)] The filter is characterized with a geometric framework, able to preserve
	the Lie Group and avoid singularity unlike Gaussian navigation filters
	in \cite{feng2020kalman,wen2020new,strohmeier2018ultra,liu2022formation}.
	\item[3)] The stochastic filter effectively addresses unknown measurement noise
	introduced by an IMU.
	\item[4)] The proposed filter guarantees semi-global uniform ultimate boundedness
	(SGUUB) of the closed loop error signals in mean square using Lyapunov
	stability.
\end{enumerate}
The paper is composed of seven Sections and an Appendix. Section \ref{sec:Preliminaries-and-Math}
presents preliminaries and the related math notation. Section \ref{sec:SE3_Problem-Formulation}
details the UWB positioning problem, orientation determination using
IMU, and the navigation estimation problem. Section \ref{sec:UWB_Filter}
presents the proposed nonlinear stochastic navigation filter design
on $\mathbb{SE}_{2}\left(3\right)$. Section \ref{sec:UWB_Pose} illustrates
the possibility of position and orientation determination using solely
UWB technology conditioned on potential advancement of UWB ranging
accuracy. Section \ref{sec:UWB_Simulations} demonstrates the robustness
of the proposed approach by means of testing it on a real-world dataset.
Finally, Section \ref{sec:SE3_Conclusion} concludes the work.

\section{Preliminaries\label{sec:Preliminaries-and-Math}}

\begin{table}[t]
	\centering{}\caption{\label{tab:Table-of-Notations2}Nomenclature}
	\begin{tabular}{ll>{\raggedright}p{6.3cm}}
		\toprule 
		\addlinespace[0.1cm]
		$\mathbb{R}^{n\times m}$ & : & $n$-by-$m$ real dimensional space\tabularnewline
		\addlinespace[0.1cm]
		$\mathbb{SO}\left(3\right)$ & : & Special Orthogonal Group\tabularnewline
		\addlinespace[0.1cm]
		$\mathfrak{so}\left(3\right)$ & : & Lie-algebra of $\mathbb{SO}\left(3\right)$\tabularnewline
		\addlinespace[0.1cm]
		$\mathbb{SE}_{2}\left(3\right)$ & : & Extended Special Euclidean Group, $\mathbb{SE}_{2}\left(3\right)=\mathbb{SO}\left(3\right)\times\mathbb{R}^{3}\times\mathbb{R}^{3}$\tabularnewline
		\addlinespace[0.1cm]
		$\mathbb{S}^{3}$ & : & \noindent Three-unit-sphere\tabularnewline
		\addlinespace[0.1cm]
		$h_{i}$ & : & $i$th UWB Anchor (source) position, $h_{i}\in\mathbb{R}^{3}$\tabularnewline
		\addlinespace[0.1cm]
		$P$ & : & Unknown vehicle (UWB Tag) position, $P\in\mathbb{R}^{3}$\tabularnewline
		\addlinespace[0.1cm]
		$\hat{P}$ & : & Estimated vehicle position, $\hat{P}\in\mathbb{R}^{3}$\tabularnewline
		\addlinespace[0.1cm]
		$R$ and $\hat{R}$ & : & True (unknown) and estimated attitude, $R,\hat{R}\in\mathbb{SO}\left(3\right)$\tabularnewline
		\addlinespace[0.1cm]
		$V$ and $\hat{V}$ & : & True (unknown) and estimated linear velocity, $V,\hat{V}\in\mathbb{R}^{3}$\tabularnewline
		\addlinespace[0.1cm]
		$\Omega$ and $\Omega_{m}$ & : & True and measured angular velocity, $\Omega,\Omega_{m}\in\mathbb{R}^{3}$\tabularnewline
		\addlinespace[0.1cm]
		$a$ and $a_{m}$ & : & True and measured acceleration, $a,a_{m}\in\mathbb{R}^{3}$\tabularnewline
		\addlinespace[0.1cm]
		$X$ and $\hat{X}$ & : & True (unknown) and estimated navigation, $X,\hat{X}\in\mathbb{SE}_{2}\left(3\right)$\tabularnewline
		\addlinespace[0.1cm]
		$P_{y}$ & : & Reconstructed position, $R_{y}\in\mathbb{R}^{3}$\tabularnewline
		\addlinespace[0.1cm]
		$\tilde{R}$ & : & Attitude estimation error, $\tilde{R}\in\mathbb{SO}\left(3\right)$\tabularnewline
		\addlinespace[0.1cm]
		$\tilde{P}$ and $\tilde{V}$ & : & Position and linear velocity estimation error, $\tilde{P},\tilde{V}\in\mathbb{R}^{3}$\tabularnewline
		\bottomrule
	\end{tabular}
\end{table}

Throughout this paper $\mathbb{R}_{+}$ refers to the set of nonnegative
real numbers. For $v\in\mathbb{R}^{n}$ and $Q\in\mathbb{R}^{n\times m}$,
the Euclidean norm of $v$ is described by $||v||=\sqrt{v^{\top}v}$
and the Frobenius norm of $Q$ is defined by $||Q||_{F}=\sqrt{{\rm Tr}\{QQ^{*}\}}$
where $*$ is a conjugate transpose. $\mathbf{I}_{n}$ represents
an $n$-by-$n$ identity matrix and $0_{n\times m}$ describes an
$n$-by-$m$ zero matrix. For $M_{r}\in\mathbb{R}^{n\times n}$, $\lambda(M_{r})=\{\lambda_{1},\lambda_{2},\ldots,\lambda_{n}\}$
describes the set of eigenvalues of $M_{r}$ with $\overline{\lambda}_{M_{r}}=\overline{\lambda}(M_{r})$
referring to the maximum eigenvalue and $\underline{\lambda}_{M_{r}}=\underline{\lambda}(M_{r})$
being the minimum eigenvalue of $M_{r}$. ${\rm det}(\cdot)$ denotes
a determinant, $\exp(\cdot)$ represents exponential, and $\mathbb{E}[\cdot]$
refers to an expected value. $\left\{ \mathcal{I}\right\} $ describes
fixed inertial-frame and $\left\{ \mathcal{B}\right\} $ represents
body-frame fixed to the navigating vehicle. The Special Orthogonal
Group $\mathbb{SO}\left(3\right)$ is described by \cite{hashim2018SO3Stochastic,hashim2019SO3Wiley}
\[
\mathbb{SO}\left(3\right)=\left\{ \left.R\in\mathbb{R}^{3\times3}\right|R^{\top}R=RR^{\top}=\mathbf{I}_{3}\text{, }{\rm det}\left(R\right)=+1\right\} 
\]
where $R\in\mathbb{SO}\left(3\right)$ refers to vehicle's orientation
(attitude) in the $\left\{ \mathcal{B}\right\} $-frame. The Lie algebra
of $\mathbb{SO}(3)$ is denoted as $\mathfrak{so}(3)$ such that
\begin{align*}
	\mathfrak{so}\left(3\right) & =\{[v]_{\times}\in\mathbb{R}^{3\times3}|[v]_{\times}^{\top}=-[v]_{\times},v\in\mathbb{R}^{3}\}\\
	\left[v\right]_{\times} & =\left[\begin{array}{ccc}
		0 & -v_{3} & v_{2}\\
		v_{3} & 0 & -v_{1}\\
		-v_{2} & v_{1} & 0
	\end{array}\right]\in\mathfrak{so}\left(3\right),\hspace{1em}v=\left[\begin{array}{c}
		v_{1}\\
		v_{2}\\
		v_{3}
	\end{array}\right]
\end{align*}
where $[v]_{\times}$ describes a skew symmetric matrix{\small{}.}
$\mathbf{vex}:\mathfrak{so}\left(3\right)\rightarrow\mathbb{R}^{3}$
represents the inverse mapping of $[\cdot]_{\times}$ such that $\mathbf{vex}([v]_{\times})=v,\forall v\in\mathbb{R}^{3}$.
The anti-symmetric projection on $\mathfrak{so}\left(3\right)$ is
described by $\boldsymbol{\mathcal{P}}_{a}(M_{r})=\frac{1}{2}(M_{r}-M_{r}^{\top})\in\mathfrak{so}\left(3\right),\forall M_{r}\in\mathbb{R}^{3\times3}$.
For $M_{r}\in\mathbb{R}^{3\times3}$, $\boldsymbol{\Upsilon}=\mathbf{vex}\circ\boldsymbol{\mathcal{P}}_{a}$
represents a composition mapping with $\boldsymbol{\Upsilon}(M_{r})=\mathbf{vex}(\boldsymbol{\mathcal{P}}_{a}(M_{r}))\in\mathbb{R}^{3}$.
The Euclidean distance of thr vehicle's orientation $R\in\mathbb{SO}\left(3\right)$
is defined by
\begin{equation}
	||R||_{{\rm I}}={\rm Tr}\{\mathbf{I}_{3}-R\}/4\in\left[0,1\right]\label{eq:NAV_Ecul_Dist}
\end{equation}
with $-1\leq{\rm Tr}\{R\}\leq3$ and $||R||_{{\rm I}}=\frac{1}{8}||\mathbf{I}_{3}-R||_{F}^{2}$,
refer to \cite{hashim2018SO3Stochastic}. Likewise, we define $||MR||_{{\rm I}}={\rm Tr}\{M-MR\}/4$
for all $M\in\mathbb{R}^{3\times3}$. Consider a vehicle navigating
in 3D space where $R\in\mathbb{SO}\left(3\right)$, $P\in\mathbb{R}^{3}$,
and $V\in\mathbb{R}^{3}$ denote its attitude, position, and velocity,
respectively, with $R\in\{\mathcal{B}\}$ and $P,V\in\{\mathcal{I}\}$.
Define $\mathbb{SE}_{2}\left(3\right)=\mathbb{SO}\left(3\right)\times\mathbb{R}^{3}\times\mathbb{R}^{3}\subset\mathbb{R}^{5\times5}$
\cite{barrau2016invariant} add later as the extended form of the
Special Euclidean Group $\mathbb{SE}\left(3\right)=\mathbb{SO}\left(3\right)\times\mathbb{R}^{3}\subset\mathbb{R}^{4\times4}$
with
\begin{align}
	\mathbb{SE}_{2}\left(3\right) & =\{\left.X\in\mathbb{R}^{5\times5}\right|R\in\mathbb{SO}\left(3\right),P,V\in\mathbb{R}^{3}\}\label{eq:NAV_SE2_3}\\
	X=\Psi( & R,P,V)=\left[\begin{array}{ccc}
		R & P & V\\
		0_{1\times3} & 1 & 0\\
		0_{1\times3} & 0 & 1
	\end{array}\right]\in\mathbb{SE}_{2}\left(3\right)\label{eq:NAV_X}
\end{align}
$X\in\mathbb{SE}_{2}\left(3\right)$ is known as a homogeneous navigation
matrix. $T_{X}\mathbb{SE}_{2}\left(3\right)\in\mathbb{R}^{5\times5}$
is the tangent space of $\mathbb{SE}_{2}\left(3\right)$ at point
$X$. Let us introduce a submanifold $\mathcal{U}_{\mathcal{M}}=\mathfrak{so}\left(3\right)\times\mathbb{R}^{3}\times\mathbb{R}^{3}\times\mathbb{R}\subset\mathbb{R}^{5\times5}$
where
\begin{align}
	\mathcal{U}_{\mathcal{M}} & =\left\{ \left.u([\Omega\text{\ensuremath{]_{\times}}},V,a,\varepsilon)\right|[\Omega\text{\ensuremath{]_{\times}}}\in\mathfrak{so}(3),V,a\in\mathbb{R}^{3},\varepsilon\in\mathbb{R}\right\} \nonumber \\
	u( & [\Omega\text{\ensuremath{]_{\times}}},V,a,\varepsilon)=\left[\begin{array}{ccc}
		[\Omega\text{\ensuremath{]_{\times}}} & V & a\\
		0_{1\times3} & 0 & 0\\
		0_{1\times3} & \varepsilon & 0
	\end{array}\right]\in\mathcal{U}_{\mathcal{M}}\subset\mathbb{R}^{5\times5}\label{eq:NAV_u}
\end{align}
such that $\Omega\in\mathbb{R}^{3}$, $V\in\mathbb{R}^{3}$, and $a\in\mathbb{R}^{3}$
refer to vehicle's true angular velocity, linear velocity, and apparent
acceleration respectively. Note that $\Omega,a\in\{\mathcal{B}\}$.
For more details on $\mathbb{SE}_{2}\left(3\right)$ and $\mathcal{U}_{\mathcal{M}}$
visit \cite{hashim2021_COMP_ENG_PRAC,hashim2021ACC}. Let $y\in{\rm \mathbb{R}}^{3}$,
$M\in\mathbb{R}^{3\times3}$, and $R\in\mathbb{SO}\left(3\right)$.
The following identities hold:
\begin{align}
	\left[Ry\right]_{\times}= & R\left[y\right]_{\times}R^{\top}\label{eq:UWB_Identity1}\\
	{\rm Tr}\{M\left[y\right]_{\times}\}= & {\rm Tr}\{\boldsymbol{\mathcal{P}}_{a}(M)\left[y\right]_{\times}\}=-2\mathbf{vex}(\boldsymbol{\mathcal{P}}_{a}(M))^{\top}y\label{eq:UWB_Identity2}
\end{align}

\section{Problem Formulation\label{sec:SE3_Problem-Formulation}}

UWB technology has several notable advantages making it an excellent
candidate for a variety of applications. The UWB signal has large
bandwidth with short life-time (frequency is inversely proportional
with time) which results in reasonable positioning accuracy \cite{hashim2023nonlinear,dwek2019improving}.
The distinguishing feature of the UWB signal is its short wavelength
allowing it to be robust against multipath interference and fading
unlike GPS signal. has UWB low power consumption and fast communication
speed. Additionally, the UWB signals can penetrate obstacles providing
localization in LOS and NLOS. New technologies shows UWB precision
of approximately 10 centimeters within ranging distance of 100 meters.
Finally, UWB technology hardware is compact and allows for low-cost
implementation \cite{hashim2023nonlinear,dwek2019improving,guo2019ultra,shen2010fundamental}.
However, similarly to IMU, a significant limitation of UWB is a high
level of measurement noise. To gain understanding of UWB technology,
let us define terms ``anchor'' and ``tag''. An anchor refers to
a fixed UWB sensor with a known location, while a tag stands for a
UWB attached to a navigating vehicle. A tag generally has to exchange
signals with multiple anchors to determine its position in 3D space.

\subsection{UWB and Time of Arrival\label{subsec:TOA}}

Time of Arrival (TOA) is a well known approach that uses instant time
between transmitter and receiver (travel time) to provide range or
distance between anchor and tag, as long as the tag is in the range
of the transmitted signal \cite{hua2016joint}. Let $d_{i}$ denote
the $i$th distance between the tag and the $i$th anchor. The left
portion of Fig. \ref{fig:AnchRange}.(a) illustrates maximum range
of an anchor and the distance between an anchor and the tag. 

Unique position determination of the tag in 2D space requires a minimum
of 3 anchors. As illustrated by the left portion of Fig. \ref{fig:AnchRange}.(a),
knowledge of the range of one anchor allows to identify a circle where
the tag could be potentially located. Two anchors narrow the potential
location of the tag down to two options (left portion of Fig. \ref{fig:AnchRange}.(a)).
Finally, introducing the third anchor allows to precisely pinpoint
the position of the tag in 2D space as illustrated in Fig. \ref{fig:AnchRange}.(b).
By extension, a minimum of 4 anchors is required to uniquely position
the tag in 3D space.

\begin{figure}[h]
	\begin{centering}
		\subfloat[Anchor maximum range and distance to the tag.]{\begin{centering}
				\centering\includegraphics[scale=0.33]{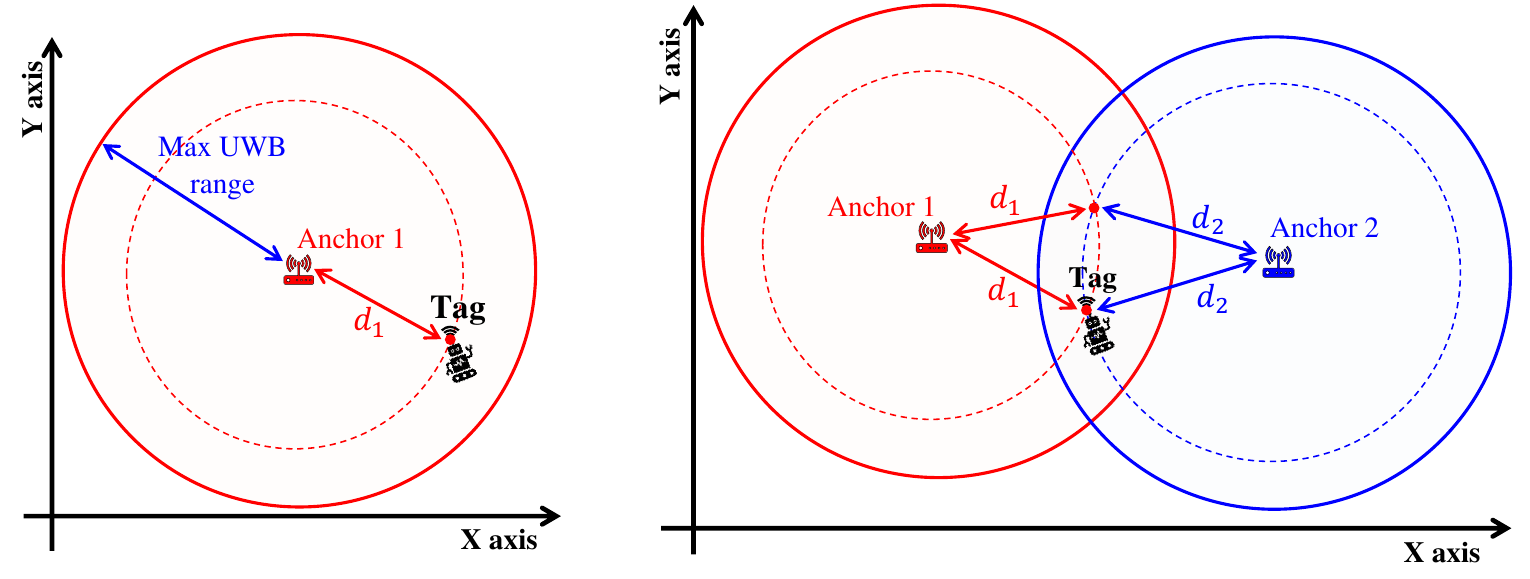}
				\par\end{centering}
		}
		\par\end{centering}
	\centering{}\subfloat[Positioning in 2D space.]{\begin{centering}
			\centering\includegraphics[scale=0.38]{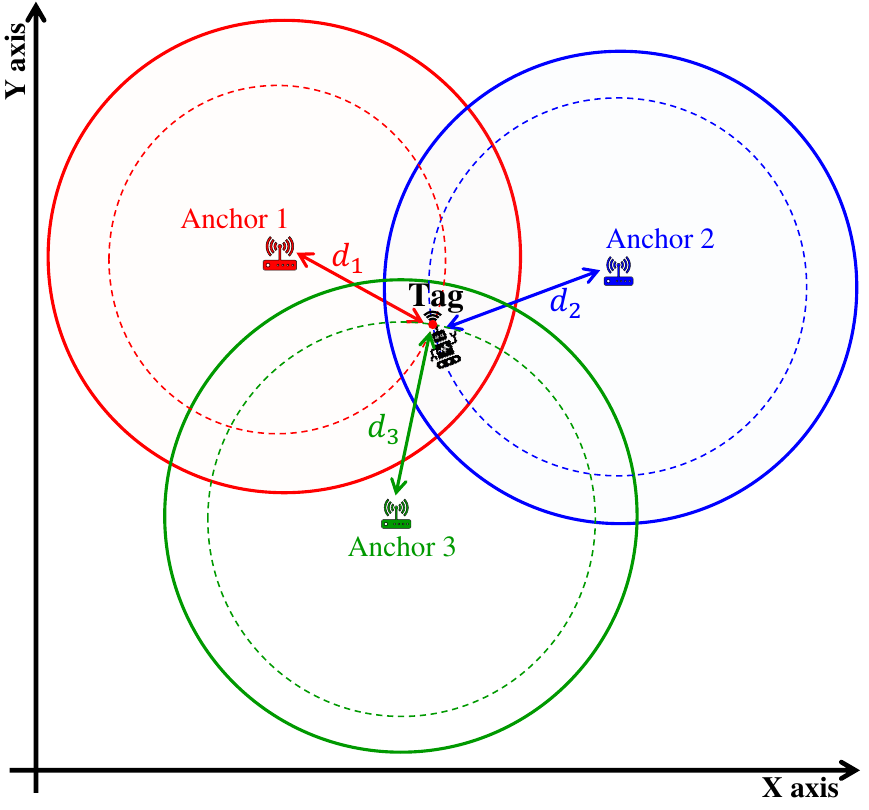}
			\par\end{centering}
	}\caption{\label{fig:AnchRange} Positioning with UWB.}
\end{figure}

Let us define the range $d_{i}$ in 3D space between the fixed $i$th
anchor positioned at $h_{i}=[x_{i},y_{i},z_{i}]^{\top}\in\mathbb{R}^{3}$
and the moving vehicle (UWB tag position) located at $P=[x,y,z]^{\top}\in\mathbb{R}^{3}$
as follows:
\begin{align}
	d_{i} & =||h_{i}-P||\nonumber \\
	& =\sqrt{(x_{i}-x_{p})^{2}+(y_{i}-y_{p})^{2}+(z_{i}-z_{p})^{2}}\label{eq:UWB_di}
\end{align}
The expression in \eqref{eq:UWB_di} can be squared as follows:
\begin{align}
	d_{i}^{2} & =||h_{i}||^{2}+||P||^{2}-2h_{i}^{\top}P\label{eq:UWB_di2}
\end{align}
As such, for $j\neq i$, in view of \eqref{eq:UWB_di} and \eqref{eq:UWB_di2},
one has
\begin{align}
	d_{j}^{2} & =||h_{j}||^{2}+||P||^{2}-2h_{j}^{\top}P\label{eq:UWB_dk2}
\end{align}
Hence, from \eqref{eq:UWB_di2} and \eqref{eq:UWB_dk2} where $i=1$
and $j=2$, one shows $d_{1}^{2}-d_{2}^{2}+||h_{2}||^{2}-||h_{1}||^{2}=2(h_{2}-h_{1})^{\top}P$.
Hence, one shows
\[
\underbrace{\left[\begin{array}{c}
		h_{2}^{\top}-h_{1}^{\top}\\
		h_{3}^{\top}-h_{1}^{\top}\\
		\vdots\\
		h_{N}^{\top}-h_{1}^{\top}
	\end{array}\right]}_{A}P=\underbrace{\frac{1}{2}\left[\begin{array}{c}
		d_{1}^{2}-d_{2}^{2}+||h_{2}||^{2}-||h_{1}||^{2}\\
		d_{1}^{2}-d_{3}^{2}+||h_{3}||^{2}-||h_{1}||^{2}\\
		\vdots\\
		d_{1}^{2}-d_{N}^{2}+||h_{N}||^{2}-||h_{1}||^{2}
	\end{array}\right]}_{B}
\]
where $N$ denotes the number of fixed anchors accessed by the tag.
Defining $\delta=\frac{1}{2}(AP-B)^{\top}(AP-B)$ and applying the
Minimum Mean Square Error (MMSE) method, one obtains $\frac{\partial\delta}{\partial P}=A^{\top}(AP-B)=0$
such that
\begin{equation}
	P=(A^{\top}A)^{-1}A^{\top}B\label{eq:UWB_P}
\end{equation}
\begin{assum}\label{Assumption:assum_TOA} To guarantee that the
	vehicle position $P$ is uniquely defined and $(A^{\top}A)^{-1}$
	is nonsingular, the tag must be within range of at least 4 anchors
	($N\geq4$) for positioning in 3D space and at least 3 anchors ($N\geq3$)
	for positioning in 2D space.\end{assum}

\subsection{UWB and Time Difference Of Arrival\label{subsec:TDOA}}

TOA-based range measurements rely on synchronization between the tag
and the anchor nodes. As such, the implementation of a TOA system
is rather complex and is rarely used in practice. Time Difference
Of Arrival (TDOA) approach, on the other hand, circumvents the need
for synchronization and thereby is a common choice \cite{hashim2023nonlinear,bottigliero2021low,sidorenko2020error}.
TDOA defines the ranging distance as the difference between arrival
time of a transmitted signal from two source anchors to the target
tag. Based on the TDOA approach, the range distance $d_{j,i}$ between
the UWB tag positioned at $P=[x,y,z]^{\top}\in\mathbb{R}^{3}$ (attached
to the vehicle) and the two anchors positioned at $h_{i}=[x_{i},y_{i},z_{i}]^{\top}\in\mathbb{R}^{3}$
and $h_{j}=[x_{j},y_{j},z_{j}]^{\top}\in\mathbb{R}^{3}$ is defined
as follows:
\begin{align}
	d_{j,i}= & ||P-h_{j}||-||P-h_{i}||\label{eq:UWB_dji}
\end{align}
The expression in \eqref{eq:UWB_dji} can be squared such that
\begin{align}
	\frac{d_{j,i}^{2}+||h_{i}||^{2}-||h_{j}||^{2}}{2}= & (h_{i}-h_{j})^{\top}P-d_{j,i}||P-h_{i}||\label{eq:UWB_dji2}
\end{align}
\begin{figure}[h]
	\centering{}\centering\includegraphics[scale=0.55]{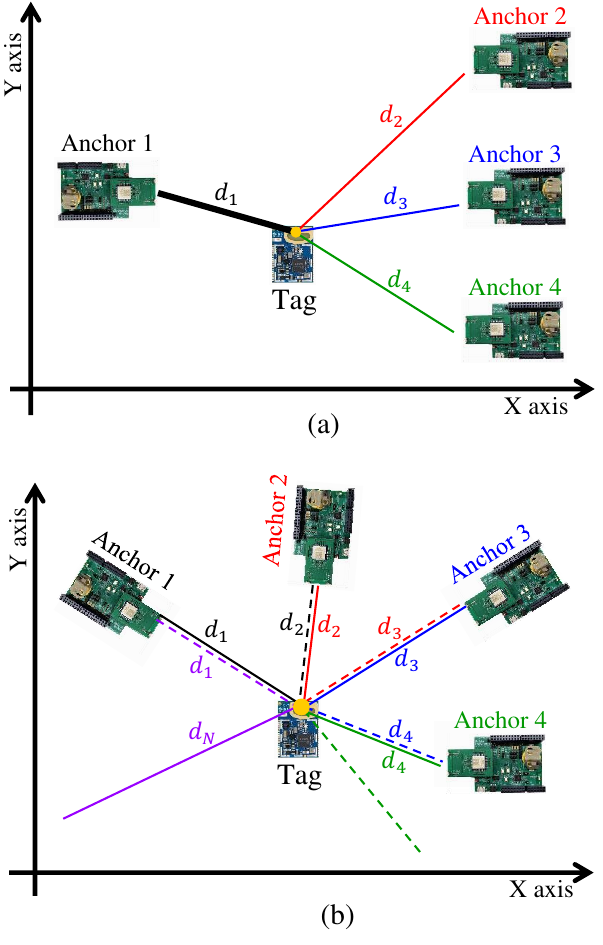}\caption{\label{fig:TDOA1} Topology of TDOA-based localization system.}
\end{figure}

One way to localize the tag involves using the range difference between
the main Base Station (BS) and other BSs. Fig. \ref{fig:TDOA1}.(a)
illustrates the topology of TDOA-based localization system composed
of a main BS (anchor 1), other BSs (anchor 2, 3, 4), and the tag.
From \eqref{eq:UWB_dji} and \eqref{eq:UWB_dji2}, considering $N$
TDOA measurements, one has\begin{small}
	\begin{align*}
		\underbrace{\left[\begin{array}{cc}
				(h_{1}-h_{2})^{\top} & -d_{2,1}\\
				(h_{1}-h_{3})^{\top} & -d_{3,1}\\
				\vdots & \vdots\\
				(h_{1}-h_{N})^{\top} & -d_{N,1}
			\end{array}\right]}_{A}\overline{P} & =\underbrace{\frac{1}{2}\left[\begin{array}{c}
				d_{2,1}^{2}+||h_{1}||^{2}-||h_{2}||^{2}\\
				d_{3,1}^{2}+||h_{1}||^{2}-||h_{3}||^{2}\\
				\vdots\\
				d_{N,1}^{2}+||h_{1}||^{2}-||h_{N}||^{2}
			\end{array}\right]}_{B}
	\end{align*}
\end{small}where $N$ represents the number of fixed anchors accessed
by the tag and $\overline{P}=[P^{\top},||P-h_{1}||]^{\top}\in\mathbb{R}^{4}$.
Defining $\delta=\frac{1}{2}(AP-B)^{\top}(AP-B)$ and applying MMSE,
one can show that $\frac{\partial\delta}{\partial P}=A^{\top}(AP-B)=0$
with
\begin{equation}
	\overline{P}=(A^{\top}A)^{-1}A^{\top}B\label{eq:UWB_Pbar}
\end{equation}
Alternatively, positioning can be achieved without utilizing a main
BS station as presented in Fig. \ref{fig:TDOA1}.(b). From the expressions
in \eqref{eq:UWB_dji} and \eqref{eq:UWB_dji2}, and for $N$ TDOA
measurements, one obtains\begin{small}
	\begin{align}
		\frac{d_{2,1}^{2}+||h_{1}||^{2}-||h_{2}||^{2}}{2}= & (h_{1}-h_{2})^{\top}P-d_{2,1}||P-h_{1}||\nonumber \\
		\frac{d_{3,2}^{2}+||h_{2}||^{2}-||h_{3}||^{2}}{2}= & (h_{2}-h_{3})^{\top}P-d_{3,2}||P-h_{2}||\nonumber \\
		\vdots\nonumber \\
		\frac{d_{1,N}^{2}+||h_{N}||^{2}-||h_{1}||^{2}}{2}= & (h_{N}-h_{1})^{\top}P-d_{1,N}||P-h_{N}||\label{eq:UWB_djin}
	\end{align}
\end{small}Since, $||P-h_{3}||=d_{3,2}+||P-h_{2}||$, one shows
\begin{align*}
	||P-h_{3}|| & =d_{3,2}+d_{2,1}+||P-h_{1}||
\end{align*}
Likewise, one finds
\begin{align*}
	||P-h_{4}|| & =d_{4,3}+d_{3,2}+d_{2,1}+||P-h_{1}||
\end{align*}
Therefore, for $N$ TDOA measurements
\begin{align*}
	||P-h_{N}|| & =\sum_{i=2}^{N}d_{i,i-1}+||P-h_{1}||
\end{align*}
Let us define:
\[
A=\left[\begin{array}{cc}
	(h_{1}-h_{2})^{\top} & -d_{2,1}\\
	(h_{2}-h_{3})^{\top} & -d_{3,2}\\
	\vdots & \vdots\\
	(h_{N-1}-h_{N})^{\top} & -d_{N,N-1}\\
	(h_{N}-h_{1})^{\top} & -d_{1,N}
\end{array}\right]
\]
\[
B=\frac{1}{2}\left[\begin{array}{c}
	d_{2,1}^{2}+||h_{1}||^{2}-||h_{2}||^{2}\\
	d_{3,2}^{2}+||h_{2}||^{2}-||h_{3}||^{2}+2d_{3,2}\sum_{i=2}^{2}d_{i,i-1}\\
	d_{4,3}^{2}+||h_{3}||^{2}-||h_{4}||^{2}+2d_{4,3}\sum_{i=2}^{3}d_{i,i-1}\\
	\vdots\\
	d_{1,N}^{2}+||h_{N}||^{2}-||h_{1}||^{2}+2d_{1,N}\sum_{i=2}^{N}d_{i,i-1}
\end{array}\right]
\]
with $N$ being the number of fixed anchors accessed by the tag and
$\overline{P}=[P^{\top},||P-h_{1}||]^{\top}\in\mathbb{R}^{4}$. Hence,
the position determination is as follows:
\begin{equation}
	\overline{P}=(A^{\top}A)^{-1}A^{\top}B\label{eq:UWB_Pbar1}
\end{equation}
Note that Assumption \ref{Assumption:assum_TOA} holds for the TDOA
approach.

\subsection{Inertial Measurement Unit\label{subsec:IMU}}

A typical IMU is composed of three sensors: a gyroscope, an accelerometer,
and a magnetometer which are sufficient for attitude determination
\cite{hashim2018SO3Stochastic,markley2006attitude,d2018uav}. The
gyroscope provides vehicle's angular velocity measurements in the
$\{\mathcal{B}\}$-frame:
\begin{equation}
	\Omega_{m}=\Omega+n_{\Omega}\in\mathbb{R}^{3}\label{eq:UWB_Angular}
\end{equation}
where $\Omega$ represents the true angular velocity and $n_{\Omega}$
stands for unknown noise in measurements, for all $\Omega_{m},n_{\Omega}\in\{\mathcal{B}\}$.
The accelerometer provides measurements of the apparent acceleration:
\begin{equation}
	a_{m}=R^{\top}(\dot{V}-\overrightarrow{\mathtt{g}})+n_{a}\in\mathbb{R}^{3}\label{eq:UWB_am}
\end{equation}
where $\overrightarrow{\mathtt{g}}=[0,0,g]^{\top}$ and $\dot{V}$
denote the gravitational acceleration and the linear acceleration,
respectively, with respect to $\{\mathcal{I}\}$-frame. $n_{a}$ describes
unknown measurement noise. Note that $\overrightarrow{\mathtt{g}}$
is described in the North-East-Down (NED) frame $\forall a_{m},n_{a}\in\{\mathcal{B}\}$.
At a low frequency, $|\overrightarrow{\mathtt{g}}|>>|\dot{V}|$ and
the accelerometer measurements can be described as $a_{m}\approx-R^{\top}\overrightarrow{\mathtt{g}}+n_{a}$.
The magnetometer measurements can be represented by
\begin{equation}
	m_{m}=R^{\top}m_{r}+n_{m}\in\mathbb{R}^{3}\label{eq:UWB_Mm}
\end{equation}
where $m_{r}=[m_{N},0,m_{D}]^{\top}\in\{\mathcal{I}\}$ denotes the
earth-magnetic field in the NED frame and $n_{m}$ represents the
additive unknown noise components, $\forall m_{m},\omega_{m}\in\{\mathcal{B}\}$.
For the sake of attitude orthogonality, normalization of IMU vector
measurements ($a_{m},m_{m}$) and observations ($\dot{V}-\overrightarrow{\mathtt{g}},m_{r}$)
is commonly employed as follows:
\begin{equation}
	\begin{cases}
		v_{1}=\frac{a_{m}}{||a_{m}||}, & r_{1}=\frac{-\overrightarrow{\mathtt{g}}}{||-\overrightarrow{\mathtt{g}}||}\\
		v_{2}=\frac{m_{m}}{||m_{m}||}, & r_{2}=\frac{m_{r}}{||m_{r}||}\\
		v_{3}=\frac{v_{1}\times v_{2}}{||v_{1}\times v_{2}||}, & r_{3}=\frac{r_{1}\times r_{2}}{||r_{1}\times r_{2}||}
	\end{cases}\label{eq:UWB_IMU_Normal}
\end{equation}
The expression in \eqref{eq:UWB_IMU_Normal} ensures availability
of 3 non-collinear measurements/observations necessary for attitude
estimation \cite{hashim2018SO3Stochastic,markley2006attitude,d2018uav}.

\subsection{Navigation Problem\label{subsec:Navigation}}

The true navigation kinematics of a vehicle traveling in 3D space
are defined by
\begin{equation}
	\begin{cases}
		\dot{R} & =R\left[\Omega\right]_{\times}\\
		\dot{P} & =V\\
		\dot{V} & =Ra+\overrightarrow{\mathtt{g}}
	\end{cases},\hspace{1em}\underbrace{\dot{X}=XU-\mathcal{\mathcal{G}}X}_{\text{Compact form}}\label{eq:UWB_NAV_dot}
\end{equation}
with $R\in\mathbb{SO}\left(3\right)$ being the true attitude, $P\in\mathbb{R}^{3}$
being the true position, $V\in\mathbb{R}^{3}$ being the true linear
velocity, $\Omega\in\mathbb{R}^{3}$ being the true angular velocity,
$\overrightarrow{\mathtt{g}}$ denoting the gravity vector, and $a\in\mathbb{R}^{3}$
being the apparent acceleration (all non-gravitational forces on the
vehicle) for all $R,\Omega,a\in\{\mathcal{B}\}$ and $P,V\in\{\mathcal{I}\}$.
The right portion of \eqref{eq:UWB_NAV_dot} represents the navigation
kinematics in a compact form with $X\in\mathbb{SE}_{2}\left(3\right)$
as per the map defined in \eqref{eq:NAV_u}, while $U=u([\Omega\text{\ensuremath{]_{\times}}},0_{3\times1},a,1)\in\mathcal{U}_{m}$,
and $\mathcal{\mathcal{G}}=u(0_{3\times3},0_{3\times1},-\overrightarrow{\mathtt{g}},1)\in\mathcal{U}_{m}$
as per the map defined in \eqref{eq:NAV_u}. For more information
see \cite{hashim2021_COMP_ENG_PRAC}. In view of the navigation model
in \eqref{eq:UWB_NAV_dot}, the measurements of $\Omega$ and $a$
are given by
\begin{equation}
	\begin{cases}
		\Omega_{m} & =\Omega+n_{\Omega}\in\mathbb{R}^{3}\\
		a_{m} & =a+n_{a}\in\mathbb{R}^{3}
	\end{cases}\label{eq:UWB_NAV_Noise}
\end{equation}
where $n_{\Omega}$ and $n_{a}$ refer to Gaussian unknown bounded
and zero-mean noise. Since, derivative of a Gaussian process lead
to a Gaussian process \cite{hashim2021_COMP_ENG_PRAC,tong2011observer,ito1984lectures}, the
noise can be redefined as $n_{\Omega}=\mathcal{Q}d\beta_{\Omega}/dt$
and $n_{a}=\mathcal{Q}d\beta_{a}/dt$, function of Brownian motion
process vectors \cite{hashim2018SO3Stochastic,jazwinski2007stochastic}
where $\mathcal{Q}={\rm diag}(\mathcal{Q}_{1,1},\mathcal{Q}_{2,2},\mathcal{Q}_{3,3})\in\mathbb{R}^{3\times3}$
refers to an unknown diagonal matrix (diagonal is positive and time-variant)
and ${\rm diag}(\cdot)$ refers to a diagonal of a matrix. Thus, the
covariance of $n_{\Omega}$ and $n_{a}$ is given by $\mathcal{Q}^{2}=\mathcal{Q}\mathcal{Q}^{\top}$
(for more details visit \cite{hashim2018SO3Stochastic}). In view
of \eqref{eq:NAV_Ecul_Dist}, \eqref{eq:UWB_NAV_dot}, and \eqref{eq:UWB_NAV_Noise},
the kinematics in \eqref{eq:UWB_NAV_dot} can be re-expressed to follow
stochastic differential equations as follows:
\begin{equation}
	\begin{cases}
		d||R||_{{\rm I}} & =(1/2)\mathbf{vex}(\boldsymbol{\mathcal{P}}_{a}(R))^{\top}(\Omega_{m}dt-\mathcal{Q}d\beta_{\Omega})\\
		dP & =Vdt\\
		dV & =(Ra_{m}+\overrightarrow{\mathtt{g}})dt-R\mathcal{Q}d\beta_{a}
	\end{cases}\label{eq:UWB_NAV_STCH_dot}
\end{equation}
with ${\rm Tr}\{R[\Omega_{m}]_{\times}\}=-2\mathbf{vex}(\boldsymbol{\mathcal{P}}_{a}(R))^{\top}\Omega_{m}$
as defined in \eqref{eq:UWB_Identity2}. This means that \eqref{eq:UWB_NAV_STCH_dot}
is described by
\begin{align}
	dx & =fdt+h\overline{\mathcal{Q}}d\beta\label{eq:UWB_NAV_STCH_dot1}
\end{align}
with $x=[||R||_{{\rm I}},P^{\top},V^{\top}]^{\top}\in\mathbb{R}^{7}$,
$f=[(1/2)\mathbf{vex}(\boldsymbol{\mathcal{P}}_{a}(R))^{\top}\Omega_{m},V^{\top},(Ra_{m}+\overrightarrow{\mathtt{g}})^{\top}]^{\top}\in\mathbb{R}^{7}$,
and $h\overline{\mathcal{Q}}d\beta=[(1/2)\mathbf{vex}(\boldsymbol{\mathcal{P}}_{a}(R))^{\top}d\beta_{\Omega}^{\top}\mathcal{Q},0_{3\times1}^{\top},d\beta_{a}^{\top}\mathcal{Q}]^{\top}\in\mathbb{R}^{7}$.
Let us define the following variable:
\begin{equation}
	\sigma=[\sup_{t\geq0}\mathcal{Q}_{1,1},\sup_{t\geq0}\mathcal{Q}_{2,2},\sup_{t\geq0}\mathcal{Q}_{3,3}]^{\top}\in\mathbb{R}^{3}\label{eq:NAV_s}
\end{equation}

\begin{defn}
	\label{def:NAV_SGUUB}\cite{hashim2018SO3Stochastic,ji2006adaptive}
	Recall the stochastic kinematics in \eqref{eq:UWB_NAV_STCH_dot1}.
	The state vector $x(t)$ is almost SGUUB if for initial state $x(t_{0})$
	and a known set $\varXi\in\mathbb{R}^{7}$ there is a positive constant
	$k_{c}$ and a time constant $t_{c}=t_{c}(x(t_{0}))$ with $\mathbb{E}[||x(t_{0})||]<k_{c},\forall t>t_{0}+k_{c}$.
\end{defn}
\begin{lem}
	\label{Lemm:NAV_deng}\cite{deng2001stabilization} Recall the stochastic
	differential system in \eqref{eq:UWB_NAV_STCH_dot1} and consider
	$\mathbb{U}(x)$ to be a twice differentiable cost function where
	$\mathbb{U}:\mathbb{R}^{7}\rightarrow\mathbb{R}_{+}$ such that
	\begin{equation}
		\mathcal{L}\mathbb{U}(x)=(\frac{\partial\mathbb{U}}{\partial x})^{\top}f+\frac{1}{2}{\rm Tr}\{h\overline{\mathcal{Q}}^{2}h^{\top}\frac{\partial^{2}\mathbb{U}}{\partial x^{2}}\}\label{eq:UWB_Vfunction_Lyap0}
	\end{equation}
	with $\mathcal{L}\mathbb{U}$ referring to a differential operator.
	Let us define $\alpha_{1}(\cdot)$ and $\alpha_{2}(\cdot)$ as class
	$\mathcal{K}_{\infty}$ functions and define the following constants
	$z_{1}>0$ and $z_{2}\geq0$ such that
	\begin{align}
		\alpha_{1}(x) & \leq\mathbb{U}(x)\leq\alpha_{2}(x)\label{eq:UWB_Vfunction_Lyap}\\
		\mathcal{L}\mathbb{U}(x) & \leq-z_{1}\mathbb{U}(x)+z_{2}\label{eq:UWB_dVfunction_Lyap}
	\end{align}
	Therefore, the stochastic kinematics in \eqref{eq:UWB_NAV_STCH_dot}
	have an almost unique strong solution on $[0,\infty)$ and the solution
	$x$ is bounded in probability where
	\begin{equation}
		\mathbb{E}[\mathbb{U}(x)]\leq\mathbb{U}(x(0)){\rm exp}(-z_{1}t)+z_{2}/z_{1}\label{eq:UWB_EVfunction_Lyap}
	\end{equation}
	ensuring that $x$ is SGUUB in the mean square.
\end{lem}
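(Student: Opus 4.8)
The plan is to treat Lemma~\ref{Lemm:NAV_deng} as the standard stochastic Lyapunov argument: couple a non-explosion estimate with It\^{o}'s formula applied to an exponentially weighted version of $\mathbb{U}$. Since $\alpha_{1}$ in \eqref{eq:UWB_Vfunction_Lyap} is class $\mathcal{K}_{\infty}$, hence radially unbounded, I would first introduce the exit times $\tau_{R}=\inf\{t\ge 0:||x(t)||\ge R\}$ and apply Dynkin's formula to the stopped process, obtaining $\mathbb{E}[\mathbb{U}(x(t\wedge\tau_{R}))]\le\mathbb{U}(x(0))+\mathbb{E}\int_{0}^{t\wedge\tau_{R}}\mathcal{L}\mathbb{U}(x(s))\,ds\le\mathbb{U}(x(0))+z_{2}t$, where the last step uses \eqref{eq:UWB_dVfunction_Lyap} and discards the nonpositive term $-z_{1}\mathbb{U}$. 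Bounding the left-hand side below by $\alpha_{1}(R)\,\mathbb{P}(\tau_{R}\le t)$ and letting $R\to\infty$ forces $\mathbb{P}(\tau_{R}\le t)\to 0$, so the process does not explode in finite time; together with the (local Lipschitz) regularity of the navigation coefficients $f$ and $h\overline{\mathcal{Q}}$ this yields the almost surely unique strong solution on $[0,\infty)$.

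For the exponential bound I would apply It\^{o}'s formula to $s\mapsto e^{z_{1}s}\mathbb{U}(x(s))$, which gives
\begin{align*}
d\!\left(e^{z_{1}s}\mathbb{U}(x)\right)
&=e^{z_{1}s}\!\left(z_{1}\mathbb{U}(x)+\mathcal{L}\mathbb{U}(x)\right)ds\\
&\quad+e^{z_{1}s}\!\left(\frac{\partial\mathbb{U}}{\partial x}\right)^{\top}\! h\overline{\mathcal{Q}}\,d\beta ,
\end{align*}
with $\mathcal{L}\mathbb{U}$ as in \eqref{eq:UWB_Vfunction_Lyap0}. Substituting \eqref{eq:UWB_dVfunction_Lyap} bounds the drift bracket by $e^{z_{1}s}z_{2}$. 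Integrating on $[0,t\wedge\tau_{R}]$ and taking expectations annihilates the stochastic integral, which is a genuine martingale up to $\tau_{R}$ where $x$, $f$ and $h\overline{\mathcal{Q}}$ stay bounded, so
\begin{align*}
\mathbb{E}\!\left[e^{z_{1}(t\wedge\tau_{R})}\mathbb{U}(x(t\wedge\tau_{R}))\right]
&\le\mathbb{U}(x(0))+\frac{z_{2}}{z_{1}}\,\mathbb{E}\!\left[e^{z_{1}(t\wedge\tau_{R})}-1\right]\\
&\le\mathbb{U}(x(0))+\frac{z_{2}}{z_{1}}\!\left(e^{z_{1}t}-1\right).
\end{align*}
Sending $R\to\infty$ (using $\tau_{R}\uparrow\infty$ a.s.\ and Fatou's lemma on the left) and dividing by $e^{z_{1}t}$ produces $\mathbb{E}[\mathbb{U}(x(t))]\le\mathbb{U}(x(0))e^{-z_{1}t}+(z_{2}/z_{1})(1-e^{-z_{1}t})\le\mathbb{U}(x(0))e^{-z_{1}t}+z_{2}/z_{1}$, which is exactly \eqref{eq:UWB_EVfunction_Lyap}.

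To conclude, I would combine $\alpha_{1}(||x||)\le\mathbb{U}(x)$ with this moment estimate to get $\mathbb{E}[\alpha_{1}(||x||)]\le\mathbb{U}(x(0))e^{-z_{1}t}+z_{2}/z_{1}$, hence $\limsup_{t\to\infty}\mathbb{E}[\alpha_{1}(||x||)]\le z_{2}/z_{1}$; a Chebyshev estimate combined with the $\mathcal{K}_{\infty}$ inverse $\alpha_{1}^{-1}$ then delivers boundedness in probability as well as the ultimate bound $k_{c}$ and the settling time $t_{c}=t_{c}(x(t_{0}))$ required by Definition~\ref{def:NAV_SGUUB}, so $x$ is SGUUB in mean square. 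The step I expect to be the main obstacle is the localization bookkeeping, namely verifying that the stopped It\^{o} integral is a true zero-mean martingale, that $\tau_{R}\uparrow\infty$ almost surely (non-explosion), and that the passage $R\to\infty$ on the exponentially weighted estimate is justified by Fatou; the algebraic manipulations themselves are routine, and this technical core is precisely what is imported from \cite{deng2001stabilization}.
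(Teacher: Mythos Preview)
Your argument is correct and is exactly the standard Deng--Krsti\'c--Williams localization plus exponential-weighting proof. Note, however, that the paper does not actually prove Lemma~\ref{Lemm:NAV_deng}; it is quoted verbatim from \cite{deng2001stabilization} and used as a black box, so there is nothing in the paper to compare your proposal against beyond the cited source itself.
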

\begin{lem}
	\label{Lemm:UWB_Lemma2}\cite{hashim2019SO3Wiley} Consider $R\in\mathbb{SO}\left(3\right)$
	and $M_{r}=M_{r}^{\top}\in\mathbb{R}^{3\times3}$ and define $\overline{M_{r}}={\rm Tr}\{M_{r}\}\mathbf{I}_{3}-M_{r}$
	with $\overline{\lambda}_{\overline{M_{r}}}$ and $\underline{\lambda}_{\overline{M_{r}}}$
	referring to the minimum and maximum eigenvalues of $\overline{M_{r}}$,
	respectively. Let $||M_{r}R||_{{\rm I}}=\frac{1}{4}{\rm Tr}\{M_{r}(\mathbf{I}_{3}-R)\}$
	and $\boldsymbol{\Upsilon}(M_{r}R)=\mathbf{vex}(\boldsymbol{\mathcal{P}}_{a}(M_{r}R))$.
	Thus, one obtains
	\begin{align}
		||\boldsymbol{\Upsilon}(M_{r}R)||^{2} & \leq2\overline{\lambda}_{\overline{M_{r}}}||M_{r}R||_{{\rm I}}\label{eq:UWB_lemm2_1}\\
		||\boldsymbol{\Upsilon}(M_{r}R)||^{2} & \geq\frac{\underline{\lambda}_{\overline{M_{r}}}}{2}||M_{r}R||_{{\rm I}}(1+{\rm Tr}\{R\})\label{eq:UWB_lemm2_2}
	\end{align}
\end{lem}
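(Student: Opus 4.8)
The plan is to reduce everything to the angle--axis (Rodrigues) picture and two scalars. First I would write $R=\mathbf{I}_{3}+\sin\theta\,[u]_{\times}+(1-\cos\theta)[u]_{\times}^{2}$ with $u\in\mathbb{R}^{3}$, $||u||=1$, and recall $[u]_{\times}^{2}=uu^{\top}-\mathbf{I}_{3}$ and $[u]_{\times}u=0$. Since $M_{r}=M_{r}^{\top}$ we have $\boldsymbol{\mathcal{P}}_{a}(M_{r})=0$ and ${\rm Tr}\{M_{r}[u]_{\times}\}=0$ (a symmetric matrix against a skew one), so by \eqref{eq:UWB_Identity2}-type bookkeeping ${\rm Tr}\{M_{r}(\mathbf{I}_{3}-R)\}=-(1-\cos\theta){\rm Tr}\{M_{r}[u]_{\times}^{2}\}=(1-\cos\theta)({\rm Tr}\{M_{r}\}-u^{\top}M_{r}u)=(1-\cos\theta)\,u^{\top}\overline{M_{r}}u$. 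Writing $\mu:=u^{\top}\overline{M_{r}}u$, this gives $||M_{r}R||_{{\rm I}}=\tfrac14(1-\cos\theta)\mu$, and the Rayleigh inequality gives $\underline{\lambda}_{\overline{M_{r}}}\le\mu\le\overline{\lambda}_{\overline{M_{r}}}$ (here $\overline{\lambda}_{\overline{M_{r}}}$, $\underline{\lambda}_{\overline{M_{r}}}$ are the largest and smallest eigenvalues, and $\overline{M_{r}}$ is positive semi-definite, as it is for the gain $M_{r}$ appearing in the filter design, so that $\mu\ge0$).

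Next I would expand $\boldsymbol{\Upsilon}(M_{r}R)=\mathbf{vex}(\boldsymbol{\mathcal{P}}_{a}(M_{r}R))$ term by term. Using the standard identity $M_{r}[u]_{\times}+[u]_{\times}M_{r}=[\overline{M_{r}}u]_{\times}$ valid for symmetric $M_{r}$, one gets $\boldsymbol{\mathcal{P}}_{a}(M_{r}[u]_{\times})=\tfrac12[\overline{M_{r}}u]_{\times}$, while $\boldsymbol{\mathcal{P}}_{a}(M_{r}[u]_{\times}^{2})=\tfrac12(M_{r}uu^{\top}-uu^{\top}M_{r})=\tfrac12[[u]_{\times}M_{r}u]_{\times}$. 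Hence $\boldsymbol{\Upsilon}(M_{r}R)=\tfrac{\sin\theta}{2}\overline{M_{r}}u+\tfrac{1-\cos\theta}{2}[u]_{\times}M_{r}u$. Now set $w:=M_{r}u-(u^{\top}M_{r}u)u$, the component of $M_{r}u$ orthogonal to $u$; then $[u]_{\times}M_{r}u=[u]_{\times}w$ and $\overline{M_{r}}u=\mu u-w$, and the triad $\{u,\,w,\,[u]_{\times}w\}$ is mutually orthogonal with $||[u]_{\times}w||=||w||$. Expanding $\boldsymbol{\Upsilon}(M_{r}R)$ in this triad and using $\sin^{2}\theta+(1-\cos\theta)^{2}=2(1-\cos\theta)$ yields the key identity $||\boldsymbol{\Upsilon}(M_{r}R)||^{2}=\tfrac{\sin^{2}\theta}{4}\mu^{2}+\tfrac{1-\cos\theta}{2}||w||^{2}$.

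The two bounds are then one-liners. For \eqref{eq:UWB_lemm2_2} I would drop the nonnegative term $\tfrac{1-\cos\theta}{2}||w||^{2}$ and use $\mu\ge\underline{\lambda}_{\overline{M_{r}}}\ge0$, $\sin^{2}\theta=(1-\cos\theta)(1+\cos\theta)$ and $1+{\rm Tr}\{R\}=2(1+\cos\theta)$, so that $||\boldsymbol{\Upsilon}(M_{r}R)||^{2}\ge\tfrac{\sin^{2}\theta}{4}\mu^{2}\ge\tfrac{\underline{\lambda}_{\overline{M_{r}}}}{4}\sin^{2}\theta\,\mu=\tfrac{\underline{\lambda}_{\overline{M_{r}}}}{2}||M_{r}R||_{{\rm I}}(1+{\rm Tr}\{R\})$. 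For \eqref{eq:UWB_lemm2_1} I would use $1+\cos\theta\le2$ to get $\tfrac{\sin^{2}\theta}{4}\mu^{2}\le\tfrac{1-\cos\theta}{2}\mu^{2}$, and $||w||^{2}=u^{\top}\overline{M_{r}}^{2}u-\mu^{2}\le\overline{\lambda}_{\overline{M_{r}}}\mu-\mu^{2}$ (since $\overline{M_{r}}(\overline{\lambda}_{\overline{M_{r}}}\mathbf{I}_{3}-\overline{M_{r}})$ is positive semi-definite); adding the two pieces gives the exact cancellation $||\boldsymbol{\Upsilon}(M_{r}R)||^{2}\le\tfrac{1-\cos\theta}{2}\overline{\lambda}_{\overline{M_{r}}}\mu=2\overline{\lambda}_{\overline{M_{r}}}||M_{r}R||_{{\rm I}}$.

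The main obstacle is entirely bookkeeping: getting the $\mathbf{vex}\circ\boldsymbol{\mathcal{P}}_{a}$ identities for $M_{r}[u]_{\times}$ and $M_{r}[u]_{\times}^{2}$ with correct signs, and noticing the orthogonality of $\{u,w,[u]_{\times}w\}$, which is what collapses $||\boldsymbol{\Upsilon}(M_{r}R)||^{2}$ to the clean two-term expression; after that, both inequalities reduce to $1+\cos\theta\le2$, $\mu\in[\underline{\lambda}_{\overline{M_{r}}},\overline{\lambda}_{\overline{M_{r}}}]$, and the operator bound $\overline{M_{r}}^{2}\le\overline{\lambda}_{\overline{M_{r}}}\overline{M_{r}}$. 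One should also treat the degenerate case $\theta=0$ separately (then $R=\mathbf{I}_{3}$ and both sides vanish), and keep in mind that positive semi-definiteness of $\overline{M_{r}}$ is precisely what makes $\mu\ge0$ and renders the Rayleigh bounds usable.
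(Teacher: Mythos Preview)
Your proof is correct and self-contained. Note, however, that the paper does not actually supply its own proof of this lemma: it is stated with a citation to \cite{hashim2019SO3Wiley} and used as a black box in the proof of Theorem~\ref{thm:Theorem1}. So there is no in-paper argument to compare against.

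On the substance: your angle--axis reduction, the identity $M_{r}[u]_{\times}+[u]_{\times}M_{r}=[\overline{M_{r}}u]_{\times}$, and the orthogonal decomposition $\overline{M_{r}}u=\mu u-w$ with $w\perp u$ are all correct, and the resulting closed form $||\boldsymbol{\Upsilon}(M_{r}R)||^{2}=\tfrac{\sin^{2}\theta}{4}\mu^{2}+\tfrac{1-\cos\theta}{2}||w||^{2}$ makes both inequalities transparent. Your observation that the operator bound $\overline{M_{r}}^{2}\le\overline{\lambda}_{\overline{M_{r}}}\overline{M_{r}}$ (and indeed the nonnegativity of $||M_{r}R||_{{\rm I}}$ itself) requires $\overline{M_{r}}\succeq0$ is exactly the implicit hypothesis under which the lemma is applied in Section~\ref{sec:UWB_Filter}, where $M_{r}=\sum_{i}s_{i}r_{i}r_{i}^{\top}$ with $s_{i}>0$ is positive semi-definite by construction. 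It is worth flagging that the lemma statement in the paper lists $\overline{\lambda}_{\overline{M_{r}}}$ and $\underline{\lambda}_{\overline{M_{r}}}$ as ``minimum and maximum'' respectively, which is a typo relative to the conventions in Section~\ref{sec:Preliminaries-and-Math}; you have read them in the consistent way.
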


\section{Stochastic Navigation Filter \label{sec:UWB_Filter}}

In this section, a novel nonlinear stochastic complementary filter
that operates based on the fusion of UWB and IMU measurements is proposed.
Let $\hat{R}\in\mathbb{SO}\left(3\right)$, $\hat{P}\in\mathbb{R}^{3}$,
and $\hat{V}\in\mathbb{R}^{3}$ denote the estimates of attitude,
position, and linear velocity, respectively. Define the errors between
the true and the estimated values of attitude ($\tilde{R}$), position
($\tilde{P}$), and linear velocity ($\tilde{V}$) as follows:
\begin{equation}
	\begin{cases}
		\tilde{R} & =R\hat{R}^{\top}\\
		\tilde{P} & =P-\hat{P}\\
		\tilde{V} & =V-\hat{V}
	\end{cases}\label{eq:UWB_NAV_error}
\end{equation}
Let $\hat{\sigma}$ be the upper bound covariance estimate of $\sigma$
and define the estimation error as
\begin{equation}
	\tilde{\sigma}=\sigma-\hat{\sigma}\in\mathbb{R}^{3}\label{eq:UWB_NAV_s_error}
\end{equation}
For attitude estimation, our objective is to utilize direct vector
observations and measurements in the implementation. As such, from
\eqref{eq:UWB_IMU_Normal}, define 
\begin{align}
	M_{r} & =\sum_{i=1}^{3}s_{i}r_{i}r_{i}^{\top},\hspace{1em}M_{B}=\sum_{i=1}^{3}s_{i}v_{i}v_{i}^{\top}\label{eq:UWB_Mr_Mv}
\end{align}
where $s_{i}$ denotes the $i$th sensor confidence such that $\sum_{i=1}^{3}s_{i}=3$.
Define
\begin{equation}
	\hat{v}_{i}=\hat{R}^{\top}r_{i},\hspace{1em}\forall i=1,2,3\label{eq:UWB_v_est}
\end{equation}
Therefore, one finds
\begin{align}
	\mathbf{vex}(\boldsymbol{\mathcal{P}}_{a}(M_{r}\tilde{R})) & =\frac{1}{2}\mathbf{vex}(M_{r}\tilde{R}-\tilde{R}^{\top}M_{r})\nonumber \\
	& =\frac{1}{2}\mathbf{vex}\left(\sum_{i=1}^{3}s_{i}r_{i}v_{i}^{\top}\hat{R}^{\top}-\sum_{i=1}^{3}s_{i}\hat{R}v_{i}r_{i}^{\top}\right)\nonumber \\
	& =\frac{1}{2}\sum_{i=1}^{3}\hat{R}s_{i}(v_{i}\times\hat{v}_{i})\label{eq:UWB_VEX}
\end{align}
where $[v_{i}\times\hat{v}_{i}]_{\times}=\hat{v}_{i}v_{i}^{\top}-v_{i}\hat{v}_{i}^{\top}$.
Also, one shows
\begin{align}
	E_{r}=||M_{r}\tilde{R}||_{{\rm I}} & =\frac{1}{4}{\rm Tr}\{M_{r}(\mathbf{I}_{3}-\tilde{R})\}\nonumber \\
	& =\frac{1}{4}{\rm Tr}\left\{ M_{r}-\sum_{i=1}^{3}s_{i}\hat{R}\hat{v}_{i}v_{i}^{\top}\hat{R}^{\top}\right\} \label{eq:UWB_Er}
\end{align}
where $E_{r}:\mathbb{SO}\left(3\right)\rightarrow\mathbb{R}_{+}$
with $E_{r}>0\forall\tilde{R}\neq\mathbf{I}_{3}$ and $E_{r}=0$ at
$\tilde{R}=\mathbf{I}_{3}$. To this end, let $P_{y}$ denote a reconstructed
position satisfying Assumption \ref{Assumption:assum_TOA} which can
be obtained, for instance, as follows: 
\begin{equation}
	\begin{cases}
		P_{y}=(A^{\top}A)^{-1}A^{\top}B, & \text{TOA}\\
		\overline{P}_{y}=\left[\begin{array}{c}
			P_{y}\\
			||P_{y}-h_{1}||
		\end{array}\right]=(A^{\top}A)^{-1}A^{\top}B, & \text{TDOA}
	\end{cases}\label{eq:UWB_Py}
\end{equation}
where $A$ and $B$ matrices are defined in Section \ref{sec:SE3_Problem-Formulation}.
Let us define the following set of equations which includes a covariance
adaptation mechanism and correction factors:
\begin{equation}
	\begin{cases}
		E_{r} & =\frac{1}{4}{\rm Tr}\sum_{i=1}^{3}s_{i}(r_{i}r_{i}^{\top}-\hat{R}\hat{v}_{i}v_{i}^{\top}\hat{R}^{\top})\\
		\mathcal{D}_{v} & ={\rm diag}(\sum_{i=1}^{3}s_{i}v_{i}\times\hat{v}_{i})\\
		\dot{\hat{\sigma}} & =\gamma_{\sigma}\frac{E_{r}+2}{8}\exp(E_{r})\mathcal{D}_{{\rm v}}(\sum_{i=1}^{n}s_{i}v_{i}\times\hat{v}_{i})-k_{\sigma}\gamma_{\sigma}\hat{\sigma}\\
		w_{\Omega} & =-\frac{k_{1}}{2}\sum_{i=1}^{n}\hat{R}s_{i}(v_{i}\times\hat{v}_{i})-\frac{1}{8}\frac{E_{r}+2}{E_{r}+1}\hat{R}\mathcal{D}_{{\rm v}}\hat{\sigma}\\
		w_{V} & =-\frac{k_{v}}{\varepsilon}(P_{y}-\hat{P})-[w_{\Omega}]_{\times}\hat{P}\\
		w_{a} & =-k_{a}(P_{y}-\hat{P})-[w_{\Omega}]_{\times}\hat{V}
	\end{cases}\label{eq:UWB_Filter1_Correc}
\end{equation}
where $\gamma_{\sigma}$, $k_{1}$, $k_{v}$, $k_{a}$, $\varepsilon$,
and $k_{\sigma}$ are positive constants. Now let us propose the following
navigation stochastic filter design:
\begin{equation}
	\begin{cases}
		\dot{\hat{R}} & =\hat{R}\left[\Omega_{m}\right]_{\times}-\left[w_{\Omega}\right]_{\times}\hat{R}\\
		\dot{\hat{P}} & =\hat{V}-\left[w_{\Omega}\right]_{\times}\hat{P}-w_{V}\\
		\dot{\hat{V}} & =\hat{R}a_{m}+\overrightarrow{\mathtt{g}}-\left[w_{\Omega}\right]_{\times}\hat{V}-w_{a}
	\end{cases},\hspace{1em}\underbrace{\dot{\hat{X}}=\hat{X}U_{m}-W\hat{X}}_{\text{Compact form}}\label{eq:UWB_Filter1_Detailed}
\end{equation}
Quaternion form of the stochastic filter design proposed above is
outlined in the \nameref{sec:UWB_AppendixA}. For the compact form,
$\hat{X}\in\mathbb{SE}_{2}\left(3\right)$ describes the estimate
of $X$, $U_{m}=u([\Omega_{m}\text{\ensuremath{]_{\times}}},0_{3\times1},a_{m},1)\in\mathcal{U}_{\mathcal{M}}$,
and $W=u([w_{\Omega}\text{\ensuremath{]_{\times}}},w_{V},w_{a},1)\in\mathcal{U}_{\mathcal{M}}$,
refer to the map in \eqref{eq:NAV_u}. For simplicity's sake, in the
analysis, $P=P_{y}$.
\begin{thm}
	\label{thm:Theorem1} Recall the nonlinear stochastic differential
	system in \eqref{eq:UWB_NAV_STCH_dot}. Consider that at each time
	instant, at least 3 non-collinear measurements/observations as to
	\eqref{eq:UWB_IMU_Normal} are available also consider Assumption
	\ref{Assumption:assum_TOA} holds true (for any of TOA- or TDOA-based
	approaches). Let the nonlinear navigation stochastic differential
	estimator in \eqref{eq:UWB_Filter1_Detailed} be integrated with the
	direct measurements and innovation terms in \eqref{eq:UWB_Py} and
	\eqref{eq:UWB_Filter1_Correc} such that $\Omega_{m}=\Omega+n_{\Omega}$
	and $a_{m}=a+n_{a}$. Thus, all the closed-loop signals are almost
	semi-globally uniformly ultimately bounded in the mean square.
\end{thm}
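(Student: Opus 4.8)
The plan is to pass to the error coordinates $(\tilde{R},\tilde{P},\tilde{V},\tilde{\sigma})$ defined in \eqref{eq:UWB_NAV_error}--\eqref{eq:UWB_NAV_s_error}, derive their stochastic differentials, assemble a single Lyapunov function on $\mathbb{SE}_{2}(3)\times\mathbb{R}^{3}$, bound its infinitesimal generator $\mathcal{L}\mathbb{U}$ of \eqref{eq:UWB_Vfunction_Lyap0} by $-z_{1}\mathbb{U}+z_{2}$, and invoke Lemma \ref{Lemm:NAV_deng}. First I would differentiate $\tilde{R}=R\hat{R}^{\top}$ along \eqref{eq:UWB_NAV_dot} and \eqref{eq:UWB_Filter1_Detailed}, use $\Omega_{m}=\Omega+n_{\Omega}$ with $n_{\Omega}=\mathcal{Q}\,d\beta_{\Omega}/dt$, $\hat{R}^{\top}=R^{\top}\tilde{R}$, and identity \eqref{eq:UWB_Identity1} to reach $d\tilde{R}=\tilde{R}[w_{\Omega}]_{\times}\,dt-[R\mathcal{Q}\,d\beta_{\Omega}]_{\times}\tilde{R}$. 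Passing to $E_{r}=\|M_{r}\tilde{R}\|_{{\rm I}}$ via \eqref{eq:UWB_Er}, identity \eqref{eq:UWB_Identity2}, and the relation $\mathbf{vex}(\boldsymbol{\mathcal{P}}_{a}(\tilde{R}M_{r}))=\tilde{R}\,\mathbf{vex}(\boldsymbol{\mathcal{P}}_{a}(M_{r}\tilde{R}))=\tfrac{1}{2}R\sum_{i}s_{i}(v_{i}\times\hat{v}_{i})$ that follows from \eqref{eq:UWB_VEX} and \eqref{eq:UWB_Identity1}, the scalar attitude error obeys $dE_{r}=\tfrac{1}{2}\mathbf{vex}(\boldsymbol{\mathcal{P}}_{a}(M_{r}\tilde{R}))^{\top}w_{\Omega}\,dt-\tfrac{1}{4}\big(\textstyle\sum_{i}s_{i}(v_{i}\times\hat{v}_{i})\big)^{\top}\mathcal{Q}\,d\beta_{\Omega}$. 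For the translational errors, subtracting \eqref{eq:UWB_Filter1_Detailed} from \eqref{eq:UWB_NAV_dot} and using that the $[w_{\Omega}]_{\times}\hat{P}$ and $[w_{\Omega}]_{\times}\hat{V}$ terms were deliberately placed in $w_{V},w_{a}$ leaves, with $P=P_{y}$, the clean cascade $\dot{\tilde{P}}=\tilde{V}-\tfrac{k_{v}}{\varepsilon}\tilde{P}$ and $d\tilde{V}=\big((\mathbf{I}_{3}-\tilde{R}^{\top})Ra_{m}-k_{a}\tilde{P}\big)dt-R\mathcal{Q}\,d\beta_{a}$, where the first term of $d\tilde{V}$ is an attitude-induced coupling that vanishes at $\tilde{R}=\mathbf{I}_{3}$; and $\dot{\tilde{\sigma}}=-\dot{\hat{\sigma}}$ since $\sigma$ in \eqref{eq:NAV_s} is a fixed vector.

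Next I would take $\mathbb{U}=4E_{r}\exp(E_{r})+\mathcal{V}_{t}(\tilde{P},\tilde{V})+\tfrac{1}{2\gamma_{\sigma}}\|\tilde{\sigma}\|^{2}$, where $\mathcal{V}_{t}$ is a positive-definite quadratic form in $(\tilde{P},\tilde{V})$ whose weight matrix solves the Lyapunov equation for the Hurwitz matrix of the cascade $(\dot{\tilde{P}},\dot{\tilde{V}})=(-\tfrac{k_{v}}{\varepsilon}\tilde{P}+\tilde{V},-k_{a}\tilde{P})$ — a cross term $\tilde{P}^{\top}\tilde{V}$ is indispensable because $\tilde{V}$ carries no direct damping. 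The attitude weight $4E_{r}\exp(E_{r})$ is chosen because $\tfrac{d}{dE_{r}}(4E_{r}\exp(E_{r}))=4(E_{r}+1)\exp(E_{r})$ and $\tfrac{d^{2}}{dE_{r}^{2}}(4E_{r}\exp(E_{r}))=4(E_{r}+2)\exp(E_{r})$ are exactly the factors that reproduce the $\tfrac{E_{r}+2}{E_{r}+1}\exp(E_{r})$ and $\tfrac{E_{r}+2}{8}\exp(E_{r})$ weights written into $w_{\Omega}$ and $\dot{\hat{\sigma}}$ in \eqref{eq:UWB_Filter1_Correc}. Because three non-collinear observations as in \eqref{eq:UWB_IMU_Normal} make $M_{r}=\sum_{i}s_{i}r_{i}r_{i}^{\top}$ positive definite, $\mathbb{U}$ is sandwiched between two class-$\mathcal{K}_{\infty}$ functions of $(E_{r},\tilde{P},\tilde{V},\tilde{\sigma})$ on every sublevel set on which $1+{\rm Tr}\{\tilde{R}\}$ stays bounded away from $0$; that sublevel set is precisely the semi-global region, since the factor $1+{\rm Tr}\{\tilde{R}\}$ appearing in \eqref{eq:UWB_lemm2_2} degenerates only when $\tilde{R}$ is a $\pi$-rotation.

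Then I would compute $\mathcal{L}\mathbb{U}$ from \eqref{eq:UWB_Vfunction_Lyap0}. The $-\tfrac{k_{1}}{2}\sum_{i}\hat{R}s_{i}(v_{i}\times\hat{v}_{i})$ part of $w_{\Omega}$, propagated through $4(E_{r}+1)\exp(E_{r})\cdot\tfrac{1}{2}\mathbf{vex}(\boldsymbol{\mathcal{P}}_{a}(M_{r}\tilde{R}))^{\top}(\cdot)$, yields a strictly negative term $\propto-k_{1}(E_{r}+1)\exp(E_{r})\|\boldsymbol{\Upsilon}(M_{r}\tilde{R})\|^{2}\le-c_{1}k_{1}E_{r}(1+{\rm Tr}\{\tilde{R}\})\exp(E_{r})$ for a positive constant $c_{1}$, using \eqref{eq:UWB_lemm2_2}; the quadratic form $\mathcal{V}_{t}$ contributes a negative-definite quantity in $(\tilde{P},\tilde{V})$ by construction; and the It\^o term $\tfrac12{\rm Tr}\{h\overline{\mathcal{Q}}^{2}h^{\top}\partial^{2}\mathbb{U}/\partial x^{2}\}$ produces a contribution $\propto(E_{r}+2)\exp(E_{r})\big(\sum_{i}s_{i}v_{i}\times\hat{v}_{i}\big)^{\top}\mathcal{Q}^{2}\big(\sum_{i}s_{i}v_{i}\times\hat{v}_{i}\big)$ from the attitude channel plus a bounded ${\rm Tr}\{\mathcal{Q}^{2}\}$-type term from the $\tilde{V}$ channel. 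Bounding $\mathcal{Q}^{2}$ entrywise by $\sigma$ converts the first It\^o contribution into a quantity linear in $\mathcal{D}_{{\rm v}}(\sum_{i}s_{i}v_{i}\times\hat{v}_{i})$ and in $\sigma=\tilde{\sigma}+\hat{\sigma}$; the $\hat{\sigma}$-weighted piece of $w_{\Omega}$ then cancels the $\hat{\sigma}$ part, while $\tfrac{d}{dt}\tfrac{1}{2\gamma_{\sigma}}\|\tilde{\sigma}\|^{2}=-\tfrac{1}{\gamma_{\sigma}}\tilde{\sigma}^{\top}\dot{\hat{\sigma}}$ together with the adaptation law in \eqref{eq:UWB_Filter1_Correc} cancels the $\tilde{\sigma}$ part and leaves only the leakage $k_{\sigma}\tilde{\sigma}^{\top}\hat{\sigma}=k_{\sigma}\tilde{\sigma}^{\top}\sigma-k_{\sigma}\|\tilde{\sigma}\|^{2}\le-\tfrac{k_{\sigma}}{2}\|\tilde{\sigma}\|^{2}+\tfrac{k_{\sigma}}{2}\|\sigma\|^{2}$. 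The residual cross terms — $(\mathbf{I}_{3}-\tilde{R}^{\top})Ra_{m}$ paired with $\tilde{V}$, and $\tilde{P}^{\top}\tilde{V}$ — are dominated by Young's inequality together with $\|\mathbf{I}_{3}-\tilde{R}\|_{F}^{2}=8\|\tilde{R}\|_{{\rm I}}$ and the boundedness of $\|a_{m}\|$ and $\|\mathcal{Q}\|$, and absorbed into the negative terms by choosing $k_{1},k_{v}/\varepsilon,k_{a}$ large enough. This gives $\mathcal{L}\mathbb{U}\le-z_{1}\mathbb{U}+z_{2}$ with $z_{1}>0$ and $z_{2}=\tfrac{k_{\sigma}}{2}\|\sigma\|^{2}+(\text{bounded noise constant})\ge0$, whence Lemma \ref{Lemm:NAV_deng} gives $\mathbb{E}[\mathbb{U}(x)]\le\mathbb{U}(x(0)){\rm exp}(-z_{1}t)+z_{2}/z_{1}$ and hence almost SGUUB of all closed-loop signals in the mean square.

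The hard part will be the last step. One must verify that the $\sigma$-dependent output of the It\^o second-order term matches, in sign and in its $E_{r}$-dependent weight, the contribution of the adaptive correction carried by $w_{\Omega}$ and $\dot{\hat{\sigma}}$ — this is the sole reason the designers chose the unusual weights $\tfrac{E_{r}+2}{8}\exp(E_{r})$ and $\tfrac{1}{8}\tfrac{E_{r}+2}{E_{r}+1}$ — and, simultaneously, that the attitude-to-velocity coupling $(\mathbf{I}_{3}-\tilde{R}^{\top})Ra_{m}$ is dominated uniformly over the whole semi-global sublevel set rather than only near $\tilde{R}=\mathbf{I}_{3}$. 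Both require committing to the precise form of the attitude Lyapunov term and a careful Young-inequality gain chase; the remaining manipulations are routine Lie-group and stochastic-calculus bookkeeping.
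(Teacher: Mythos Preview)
Your proposal is correct and follows the paper's architecture closely: the same error variables \eqref{eq:UWB_NAV_error}--\eqref{eq:UWB_NAV_s_error}, the same attitude Lyapunov weight $E_{r}\exp(E_{r})$ (your factor $4$ is cosmetic), the same adaptive-cancellation mechanism in which the $\hat{\sigma}$-weighted piece of $w_{\Omega}$ absorbs the $\hat{\sigma}$ part of the It\^o correction while $\dot{\hat{\sigma}}$ absorbs the $\tilde{\sigma}$ part, and the same appeal to Lemma~\ref{Lemm:NAV_deng} at the end. Your derivation of the simplified translational error dynamics $\dot{\tilde{P}}=\tilde{V}-\tfrac{k_{v}}{\varepsilon}\tilde{P}$ and $d\tilde{V}=\big((\mathbf{I}_{3}-\tilde{R}^{\top})Ra_{m}-k_{a}\tilde{P}\big)dt-R\mathcal{Q}\,d\beta_{a}$ is exactly what the paper's \eqref{eq:UWB_Filter1_Error_dot} becomes once $w_{V},w_{a}$ from \eqref{eq:UWB_Filter1_Correc} are substituted.

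The one genuine difference is the translational Lyapunov term. You take $\mathcal{V}_{t}$ to be a \emph{quadratic} form in $(\tilde{P},\tilde{V})$ built from the Lyapunov equation for the Hurwitz pair $(-\tfrac{k_{v}}{\varepsilon},1;-k_{a},0)$, whereas the paper uses the \emph{quartic} combination $\mathbb{U}_{PV}=\tfrac{1}{4}\|\tilde{P}\|^{4}+\tfrac{1}{4k_{d}}\|\tilde{V}\|^{4}-\tfrac{1}{3\mu}\|\tilde{V}\|^{2}\tilde{V}^{\top}\tilde{P}$ and works in the coordinates $e_{PV}=[\|\tilde{P}\|^{2},\|\tilde{V}\|^{2}]^{\top}$. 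Your quadratic choice is arguably cleaner: the Hessian $\partial^{2}\mathcal{V}_{t}/\partial\tilde{V}^{2}$ is constant, so the It\^o contribution from the $\tilde{V}$ channel is a genuine constant that goes straight into $z_{2}$, and the attitude--velocity coupling appears as $c\|(\tilde{P},\tilde{V})\|\sqrt{E_{r}}$, which Young's inequality and \eqref{eq:UWB_lemm2_2} dispatch just as in the paper. The paper's quartic form instead produces a coupling $2c_{g}\|\tilde{V}\|^{2}\sqrt{\|M\tilde{R}\|_{{\rm I}}}$ and an It\^o term that itself scales with $\|\tilde{V}\|^{2}$, so extra absorption into the negative quartic is needed; the payoff is that the final block matrix $Q_{T}$ in \eqref{eq:UWB_LyapT_1dot_2} is written directly in the ``squared'' coordinates $(\|M_{r}\tilde{R}\|_{{\rm I}},\|\tilde{\sigma}\|,\|\tilde{P}\|^{2},\|\tilde{V}\|^{2})$. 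Either route closes; yours needs one fewer layer of Young-inequality bookkeeping, at the cost of a slightly less explicit gain condition than the paper's $k_{v}>\varepsilon/2$, $k_{d}>4\mu$, $c_{1}c_{2}>c_{3}^{2}/4$.
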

\begin{proof}From \eqref{eq:UWB_NAV_STCH_dot}, \eqref{eq:UWB_Er},
	\eqref{eq:UWB_NAV_error}, and \eqref{eq:UWB_Filter1_Detailed}, one
	obtains
	\begin{align}
		\frac{d}{dt}E_{r}= & \frac{d}{dt}\frac{1}{4}{\rm Tr}\{M_{r}(\mathbf{I}_{3}-\tilde{R})\}\nonumber \\
		= & -\frac{1}{4}{\rm Tr}\{M_{r}\tilde{R}[w_{\Omega}]_{\times}\}+\frac{1}{4}{\rm Tr}\{M_{r}\tilde{R}[\hat{R}\mathcal{Q}_{\Omega}d\beta_{\Omega}]_{\times}\}\nonumber \\
		= & \frac{1}{2}\mathbf{vex}(\boldsymbol{\mathcal{P}}_{a}(M_{r}\tilde{R}))^{\top}(w_{\Omega}dt-\hat{R}\mathcal{Q}_{\Omega}d\beta_{\Omega})\label{eq:UWB_Er_dot}
	\end{align}
	Note that $M_{r}$ is a constant matrix and ${\rm Tr}\{M_{r}\tilde{R}[w_{\Omega}]_{\times}\}={\rm Tr}\{\boldsymbol{\mathcal{P}}_{a}(M_{r}\tilde{R})[w_{\Omega}]_{\times}\}=-\frac{1}{2}\mathbf{vex}(\boldsymbol{\mathcal{P}}_{a}(M_{r}\tilde{R}))^{\top}w_{\Omega}$.
	From \eqref{eq:UWB_NAV_STCH_dot}, \eqref{eq:UWB_NAV_error}, and
	\eqref{eq:UWB_Filter1_Detailed}, one shows
	\begin{equation}
		\begin{cases}
			\dot{\tilde{P}} & =\tilde{V}+[w_{\Omega}]_{\times}\hat{P}+w_{V}\\
			d\tilde{V} & =((\tilde{R}-\mathbf{I}_{3})\hat{R}a+[w_{\Omega}]_{\times}\hat{V}+w_{a})dt-R\mathcal{Q}d\beta_{a}
		\end{cases}\label{eq:UWB_Filter1_Error_dot}
	\end{equation}
	Define $\mathbb{U}_{T}=\mathbb{U}_{T}(E_{r},\tilde{P},\tilde{V},\tilde{\sigma})$
	as a Lyapunov function candidate such that
	\begin{equation}
		\mathbb{U}_{T}=\mathbb{U}_{R}+\mathbb{U}_{PV}\label{eq:UWB_LyapT}
	\end{equation}
	where $\mathbb{U}_{T}:\mathbb{SO}\left(3\right)\times\mathbb{R}^{3}\times\mathbb{R}^{3}\times\mathbb{R}^{3}\rightarrow\mathbb{R}_{+}$.
	Now consider the following Lyapunov function candidate $\mathbb{U}_{R}:\mathbb{SO}\left(3\right)\times\mathbb{R}^{3}\rightarrow\mathbb{R}_{+}$:
	\begin{equation}
		\mathbb{U}_{R}=\exp(E_{r})E_{r}+\frac{1}{2\gamma_{\sigma}}||\tilde{\sigma}||^{2}\label{eq:UWB_LyapR}
	\end{equation}
	From \eqref{eq:UWB_Vfunction_Lyap0}, one can show that $\frac{\partial}{\partial E_{r}}\mathbb{U}_{R}=\exp(E_{r})(E_{r}+1)$
	and $\frac{\partial^{2}}{\partial E_{r}^{2}}\mathbb{U}_{R}=\exp(E_{r})(E_{r}+2)$.
	From \eqref{eq:UWB_dVfunction_Lyap} and \eqref{eq:UWB_Filter1_Detailed},
	one obtains
	\begin{align}
		\mathcal{L}\mathbb{U}_{R}= & \left(\frac{\partial\mathbb{U}_{R}}{\partial E_{r}}\right)^{\top}f_{R}+\frac{1}{2}{\rm Tr}\left\{ g_{R}\mathcal{Q}_{\Omega}^{2}g_{R}^{\top}\frac{\partial^{2}\mathbb{U}_{R}}{\partial e_{R}^{2}}\right\} -\frac{1}{\gamma_{\sigma}}\tilde{\sigma}^{\top}\dot{\hat{\sigma}}\nonumber \\
		= & \frac{1}{8}\frac{\partial^{2}U_{R}}{\partial E_{r}^{2}}\mathbf{vex}(\boldsymbol{\mathcal{P}}_{a}(M_{r}\tilde{R}))^{\top}\hat{R}\mathcal{Q}_{\Omega}^{2}\hat{R}^{\top}\mathbf{vex}(\boldsymbol{\mathcal{P}}_{a}(M_{r}\tilde{R}))\nonumber \\
		& +\frac{1}{2}\frac{\partial^{2}U_{R}}{\partial E_{r}^{2}}\mathbf{vex}(\boldsymbol{\mathcal{P}}_{a}(M_{r}\tilde{R}))^{\top}w_{\Omega}-\frac{1}{\gamma_{\sigma}}\tilde{\sigma}^{\top}\dot{\hat{\sigma}}\label{eq:UWB_LyapR1dot}
	\end{align}
	where $||\mathcal{Q}^{2}||_{F}\leq||{\rm diag}(\sigma)||_{F}$ as
	in \eqref{eq:NAV_s}. Substituting $\dot{\hat{\sigma}}$ and $w_{\Omega}$
	with their definitions in \eqref{eq:UWB_Filter1_Correc} and in view
	of \eqref{eq:UWB_lemm2_1} in Lemma \ref{Lemm:UWB_Lemma2}, it becomes
	apparent that
	\begin{align}
		\mathcal{L}\mathbb{U}_{R} & \leq-(1+{\rm Tr}\{\tilde{R}\})\frac{k_{1}\underline{\lambda}_{\overline{M_{r}}}}{4}\exp(E_{r})E_{r}+k_{\sigma}\tilde{\sigma}^{\top}\hat{\sigma}\nonumber \\
		& \leq-k_{1}c_{R}||M_{r}\tilde{R}||_{{\rm I}}-\frac{k_{\sigma}}{2}||\tilde{\sigma}||^{2}+\frac{k_{\sigma}}{2}||\sigma||^{2}\label{eq:UWB_LyapR_1dot_2}
	\end{align}
	with $k_{\sigma}\tilde{\sigma}^{\top}\sigma\leq\frac{k_{\sigma}}{2}||\sigma||^{2}+\frac{k_{\sigma}}{2}||\tilde{\sigma}||^{2}$
	(see Young's inequality) and $c_{R}=\frac{1}{4}\underline{\lambda}_{\overline{M_{r}}}(1+{\rm Tr}\{\tilde{R}\})$.
	Define $\underline{\lambda}_{R}=\min\{k_{1}c_{R},\frac{k_{\sigma}}{2}\}$
	and $e_{R}=[||M_{r}\tilde{R}||_{{\rm I}},||\tilde{\sigma}||]^{\top}$.
	Thus, $\mathcal{L}\mathbb{U}_{R}$ can be described as
	\begin{align}
		\mathcal{L}\mathbb{U}_{R} & \leq-\underline{\lambda}_{R}||e_{R}||^{2}+\frac{k_{\sigma}}{2}||\sigma||^{2}\label{eq:UWB_LyapR_1dot_3}
	\end{align}
	Consider the following real-valued function $\mathbb{U}_{PV}:\mathbb{R}^{3}\times\mathbb{R}^{3}\rightarrow\mathbb{R}_{+}$:
	\begin{align}
		\mathbb{U}_{PV} & =\frac{1}{4}||\tilde{P}||^{4}+\frac{1}{4k_{d}}||\tilde{V}||^{4}-\frac{1}{3\mu}||\tilde{V}||^{2}\tilde{V}^{\top}\tilde{P}\label{eq:UWB_LyapPV}
	\end{align}
	with $k_{d}$ and $\mu$ being positive constants. Hence, one obtains
	\begin{align}
		\frac{\partial\mathbb{U}_{PV}}{\partial\tilde{P}} & =||\tilde{P}||^{2}\tilde{P}-\frac{1}{3\mu}||\tilde{V}||^{2}\tilde{V}\nonumber \\
		\frac{\partial\mathbb{U}_{PV}}{\partial\tilde{V}} & =\frac{1}{k_{d}}||\tilde{V}||^{2}\tilde{V}-\frac{1}{\mu}||\tilde{V}||^{2}\tilde{P}\nonumber \\
		\frac{\partial^{2}\mathbb{U}_{PV}}{\partial\tilde{V}^{2}} & =\frac{1}{k_{d}}(||\tilde{V}||^{2}\mathbf{I}_{3}+2\tilde{V}\tilde{V}^{\top}-2\tilde{P}\tilde{V}^{\top})\label{eq:UWB_LyapPV_vv}
	\end{align}
	Recall that ${\rm Tr}\{\hat{R}\mathcal{Q}^{2}\hat{R}^{\top}\}={\rm Tr}\{\mathcal{Q}^{2}\}$,
	$\tilde{V}^{\top}[w_{\Omega}]_{\times}\tilde{V}=0$, and $||\mathbf{I}_{3}-\tilde{R}||_{F}=2\sqrt{2}\sqrt{||\tilde{R}||_{{\rm I}}}\leq4\overline{\lambda}_{M}\sqrt{||M\tilde{R}||_{{\rm I}}}$
	(see \cite{hashim2019SO3Wiley}). As such, in view of \eqref{eq:UWB_dVfunction_Lyap},
	\eqref{eq:UWB_Filter1_Error_dot}, \eqref{eq:UWB_LyapPV}, and \eqref{eq:UWB_LyapPV_vv},
	one finds
	\begin{align}
		& \mathcal{L}\mathbb{U}_{PV}=(||\tilde{P}||^{2}\tilde{P}-\frac{1}{3\mu}||\tilde{V}||^{2}\tilde{V})^{\top}(\tilde{V}+[w_{\Omega}]_{\times}\hat{P}+w_{V})\nonumber \\
		& \hspace{1.5em}+||\tilde{V}||^{2}(\frac{1}{k_{d}}\tilde{V}-\frac{1}{\mu}\tilde{P})^{\top}((\tilde{R}-\mathbf{I}_{3})\hat{R}a+[w_{\Omega}]_{\times}\hat{V}+w_{a})\nonumber \\
		& \hspace{1.5em}+\frac{1}{2k_{d}}{\rm Tr}\left\{ (||\tilde{V}||^{2}\mathbf{I}_{3}+2\tilde{V}\tilde{V}^{\top}-2\tilde{P}\tilde{V}^{\top})R\mathcal{Q}^{2}R^{\top}\right\} \nonumber \\
		& \hspace{0.5em}\leq-c_{1}||\tilde{P}||^{4}-c_{2}||\tilde{V}||^{4}+c_{3}||\tilde{V}||^{2}||\tilde{P}||^{2}\nonumber \\
		& \hspace{1.5em}+2c_{g}||\tilde{V}||^{2}\,\sqrt{||M\tilde{R}||_{{\rm I}}}+\frac{1}{4k_{d}}||\sigma||^{2}\label{eq:UWB_LyapPV_1dot_1}
	\end{align}
	where $c_{m}=\max\{3\mu k_{d},3\mu,6k_{d}\mu\overline{\lambda}_{M}c_{a},6k_{d}k_{a}\}$,
	$c_{a}=\sup_{t\geq0}||a||$, $c_{1}=\frac{2k_{v}-\varepsilon}{2\varepsilon}$,
	$c_{2}=\frac{k_{d}-4\mu}{3\mu k_{d}}$, $c_{3}=(\varepsilon c_{m}+k_{v}k_{d})/\varepsilon k_{d}\mu$,
	and $c_{g}=\max\{\frac{4\overline{\lambda}_{M}||a||}{k_{d}},\frac{4\overline{\lambda}_{M}||a||}{\mu}\}$.
	Therefore, $\mathcal{L}\mathbb{U}_{PV}$ in \eqref{eq:UWB_LyapPV_1dot_1}
	becomes
	\begin{align}
		\mathcal{L}\mathbb{U}_{PV}\leq & -e_{PV}^{\top}\underbrace{\left[\begin{array}{cc}
				c_{1} & -\frac{c_{3}}{2}\\
				-\frac{c_{3}}{2} & c_{2}
			\end{array}\right]}_{Q_{PV}}e_{PV}\nonumber \\
		& +2c_{g}||e_{PV}||\,||e_{R}||+\frac{1}{4k_{d}}||\sigma||^{2}\label{eq:UWB_LyapPV_1dot_Final}
	\end{align}
	where $e_{PV}=[||\tilde{P}||^{2},||\tilde{V}||^{2}]^{\top}$. $Q_{PV}$
	is made positive by selecting $k_{v}>\varepsilon/2$, $k_{d}>4\mu$,
	and $c_{1}c_{2}>\frac{c_{3}^{2}}{4}$. Let us define $\underline{\lambda}_{PV}=\underline{\lambda}(Q_{PV})$.
	From \eqref{eq:UWB_LyapT}, \eqref{eq:UWB_LyapR_1dot_3}, and \eqref{eq:UWB_LyapPV_1dot_Final},
	the differential operator $\mathcal{L}\mathbb{U}_{R}$ and $\mathcal{L}\mathbb{U}_{PV}$
	can be combined and $\mathcal{L}\mathbb{U}_{T}$ can be expressed
	as
	\begin{align*}
		\mathcal{L}\mathbb{U}_{T}\leq & -\underline{\lambda}_{R}||e_{R}||^{2}-\underline{\lambda}_{PV}||e_{PV}||^{2}+2c_{g}||e_{PV}||\,||e_{R}||\\
		& +(\frac{1}{4k_{d}}+\frac{k_{\sigma}}{2})||\sigma||^{2}
	\end{align*}
	\begin{align}
		\mathcal{L}\mathbb{U}_{T}\leq & -e_{T}^{\top}\underbrace{\left[\begin{array}{cc}
				\underline{\lambda}_{R} & -c_{g}\\
				-c_{g} & \underline{\lambda}_{PV}
			\end{array}\right]}_{Q_{T}}e_{T}+\eta_{\sigma}\label{eq:UWB_LyapT_1dot_2}
	\end{align}
	where $\eta_{\sigma}=(\frac{1}{4k_{d}}+\frac{k_{\sigma}}{2})||\sigma||^{2}$
	and $e_{T}=[||e_{R}||,||e_{PV}||]^{\top}$. Hence, $Q_{T}$ is made
	positive by selecting $\underline{\lambda}_{R}>\frac{c_{g}^{2}}{4\underline{\lambda}_{PV}}$.
	Let us define $\underline{\lambda}_{T}=\underline{\lambda}(Q_{T})$.
	Hence, from \eqref{eq:UWB_LyapT_1dot_2}, one shows
	\begin{align}
		\mathcal{L}\mathbb{U}_{T}\leq & -\underline{\lambda}_{T}||e_{T}||^{2}+\eta_{\sigma}\label{eq:UWB_LyapT_1dot_Final}
	\end{align}
	In other words
	\begin{equation}
		d\mathbb{E}[\mathbb{U}_{T}]/dt=\mathbb{E}[\mathcal{L}\mathbb{U}_{T}]\leq-\underline{\lambda}_{T}\mathbb{E}[\mathbb{U}_{T}]+\eta_{\sigma}\label{eq:UWB_Lyap2_final3}
	\end{equation}
	In view of Lemma \ref{Lemm:NAV_deng}, it is obvious that
	\[
	0\leq\mathbb{E}[\mathbb{U}_{T}(t)]\leq\mathbb{U}_{T}(0){\rm exp}(-\underline{\lambda}_{T}t)+\eta_{\sigma}/\underline{\lambda}_{T}
	\]
	Thereby, it becomes apparent that $e_{T}$ or ($||M\tilde{R}||_{{\rm I}},||\tilde{P}||,||\tilde{V}||,||\tilde{\sigma}$\textbar\textbar )
	is almost SGUUB completing the proof.\end{proof}

\subsection{Discrete Implementation}

Define $\Delta t$ as a small sample time. Algorithm \ref{alg:Alg_Disc0}
details the implementation steps of the proposed continuous nonlinear
stochastic navigation filter \eqref{eq:UWB_Py}-\eqref{eq:UWB_Filter1_Detailed}
but in a discrete form. $\exp(\cdot)$ in Algorithm \ref{alg:Alg_Disc0}
refers to an exponential of a matrix, commonly termed ``expm''.
In the algorithm, the subscript $k$ denotes the $k$th iteration.
\begin{algorithm}[h]
	\caption{\label{alg:Alg_Disc0}Discrete navigation stochastic filter}
	
	\textbf{Initialization}:
	\begin{enumerate}
		\item[{\footnotesize{}1:}] Set $\hat{P}_{0|0},\hat{V}_{0|0},\hat{\sigma}_{0}\in\mathbb{R}^{3}$
		and $\hat{R}_{0|0}\in\mathbb{SO}\left(3\right)$.
		\item[{\footnotesize{}2:}] Set $k=0$ and pick the filter design parameters.
	\end{enumerate}
	\textbf{while }(1)\textbf{ do}
	\begin{enumerate}
		\item[] /{*} Prediction {*}/
		\item[{\footnotesize{}3:}] $\hat{X}_{k|k}=\left[\begin{array}{ccc}
			\hat{R}_{k|k} & \hat{P}_{k|k} & \hat{V}_{k|k}\\
			0_{1\times3} & 1 & 0\\
			0_{1\times3} & 0 & 1
		\end{array}\right]\in\mathbb{SE}_{2}\left(3\right)$ and \\
		$\hat{U}_{k}=\left[\begin{array}{ccc}
			[\Omega_{m}[k]\text{\ensuremath{]_{\times}}} & 0_{3\times1} & a_{m}[k]\\
			0_{1\times3} & 0 & 0\\
			0_{1\times3} & 1 & 0
		\end{array}\right]\in\mathcal{U}_{\mathcal{M}}$
		\item[{\footnotesize{}4:}] $\hat{X}_{k+1|k}=\hat{X}_{k|k}\exp(\hat{U}_{k}\Delta t)$
		\item[] /{*} Update step {*}/
		\item[{\footnotesize{}5:}] $\begin{cases}
			P_{y}=(A^{\top}A)^{-1}A^{\top}B, & \text{TOA}\\
			\overline{P}_{y}=\left[\begin{array}{c}
				P_{y}\\
				||P-h_{1}||
			\end{array}\right]=(A^{\top}A)^{-1}A^{\top}B, & \text{TDOA}
		\end{cases}$
		\item[{\footnotesize{}6:}] $\begin{cases}
			v_{1}=\frac{a_{m}}{||a_{m}||}, & r_{1}=\frac{-\overrightarrow{\mathtt{g}}}{||-\overrightarrow{\mathtt{g}}||}\\
			v_{2}=\frac{m_{m}}{||m_{m}||}, & r_{2}=\frac{m_{r}}{||m_{r}||}\\
			v_{3}=\frac{v_{1}\times v_{2}}{||v_{1}\times v_{2}||}, & r_{3}=\frac{r_{1}\times r_{2}}{||r_{1}\times r_{2}||}
		\end{cases}$
		\item[{\footnotesize{}7:}] $\begin{cases}
			E_{r} & =\frac{1}{4}{\rm Tr}\sum_{i=1}^{3}s_{i}(r_{i}r_{i}^{\top}-\hat{R}_{k|k}\hat{v}_{i}v_{i}^{\top}\hat{R}_{k|k}^{\top})\\
			\mathcal{D}_{v} & ={\rm diag}(\sum_{i=1}^{3}s_{i}v_{i}\times\hat{v}_{i})
		\end{cases}$
		\item[{\footnotesize{}8:}] $\begin{cases}
			\hat{\sigma}_{k+1} & =\hat{\sigma}_{k}-\Delta tk_{\sigma}\gamma_{\sigma}\hat{\sigma}_{k}\\
			& \hspace{1em}+\Delta t\gamma_{\sigma}\frac{E_{r}+2}{8}\exp(E_{r})\mathcal{D}_{{\rm v}}(\sum_{i=1}^{n}s_{i}v_{i}\times\hat{v}_{i})\\
			w_{\Omega} & =-\frac{k_{1}}{2}\sum_{i=1}^{n}\hat{R}s_{i}(v_{i}\times\hat{v}_{i})-\frac{1}{8}\frac{E_{r}+2}{E_{r}+1}\hat{R}\mathcal{D}_{{\rm v}}\hat{\sigma}_{k}\\
			w_{V} & =-\frac{k_{v}}{\varepsilon}(P_{y}-\hat{P}_{k|k})-[w_{\Omega}]_{\times}\hat{P}_{k|k}\\
			w_{a} & =-\overrightarrow{\mathtt{g}}-k_{a}(P_{y}-\hat{P}_{k|k})-[w_{\Omega}]_{\times}\hat{V}_{k|k}
		\end{cases}$
		\item[{\footnotesize{}9:}] $W_{k}=\left[\begin{array}{ccc}
			[w_{\Omega}[k]\text{\ensuremath{]_{\times}}} & w_{V}[k] & w_{a}[k]\\
			0_{1\times3} & 0 & 0\\
			0_{1\times3} & 1 & 0
		\end{array}\right]$
		\item[{\footnotesize{}10:}] $\hat{X}_{k+1|k+1}=\exp(-W_{k}\Delta t)\hat{X}_{k+1|k}$
		\item[{\footnotesize{}11:}] $k=k+1$
	\end{enumerate}
	\textbf{end while}
\end{algorithm}

\section{Pose determination from UWB\label{sec:UWB_Pose}}

The UWB industry is challenged with improving ranging accuracy. Improved
UWB ranging accuracy will constitute a breakthrough enabling pose
(orientation + position) determination based solely on UWB anchors
and tags. This will alleviate the need for a typical 9-axis IMU (gyroscope
+ accelerometer + magnetometer). Many vehicles (e.g., mobile robots
and drones) are equipped with a 6-axis IMU (gyroscope + accelerometer).
Thereby, obtaining an additional observation and measurement for orientation
determination requires equipping the vehicle with an additional motion
sensor such as a magnetometer to improve the attitude estimation accuracy.
Another approach suggests equipping the vehicle with a vision unit
(e.g., monocular or stereo camera) to enhance orientation and positioning
accuracy \cite{hashim2021_COMP_ENG_PRAC,beauvisage2021robust,cheng2014seamless}.
However, adding a vision unit increases the computational cost (e.g.,
feature extraction and pose estimation) and places additional load
on the battery power consumption. Considering the advantages offered
by the UWB technology, such as robustness against multipath interference
and fading \cite{bottigliero2021low,sidorenko2020error}, localization
in LOS and NLOS \cite{jiang2020uwb}, and low cost and power requirements
\cite{wen2020new,strohmeier2018ultra}, improving UWB ranging accuracy
will significantly advance the area of autonomous mobility and intelligent
transportation. Orientation and position of a vehicle equipped with
IMU and high accuracy UWB navigating in 3D space can be fully defined
if one of the following conditions is satisfied:
\begin{itemize}
	\item[(i)]  9-axis IMU, UWB anchors satisfying Assumption \ref{Assumption:assum_TOA},
	and a vehicle equipped with at least one tag.
	\item[(ii)]  6-axis IMU, UWB anchors satisfying Assumption \ref{Assumption:assum_TOA},
	and a vehicle equipped with at least two tags .
	\item[(iii)]  Gyroscope, UWB anchors satisfying Assumption \ref{Assumption:assum_TOA},
	and a vehicle equipped with at least three tags .
\end{itemize}
The problem formulation in Section \ref{sec:SE3_Problem-Formulation}
and the filter design in \ref{sec:UWB_Filter} consider that UWB anchors
satisfy Assumption \ref{Assumption:assum_TOA} such that the UWB tag
enables vehicle positioning and a 9-axis IMU enables orientation determination
(as detailed in Theorem \ref{thm:Theorem1}) which shows (i). Fig.
\ref{fig:UWB_Orient} illustrates a system with two tags placed at
a considerable distance and a 6-axis IMU capable of full pose determination.
Let $P=[x,y,z]^{\top}$ denote the vehicle's position (at the center
point) which also represents the position between $\{\mathcal{B}\}$-frame
and $\{\mathcal{I}\}$-frame. Assume that two UWB tags are placed
at the vehicle's edges as illustrated in Fig. \ref{fig:UWB_Orient}
$G_{1}=[x_{g1},y_{g1},z_{g1}]^{\top}$ and $G_{2}=[x_{g2},y_{g2},z_{g2}]^{\top}$
are described with respect to the $\{\mathcal{I}\}$-frame such that
$P=\frac{1}{2}(G_{1}+G_{2})$. In view of Assumption \ref{Assumption:assum_TOA},
the position of each tag can be easily defined. The transformation
from $G_{1}$ to $P$ or from $G_{2}$ to $P$ can be obtained using
vehicle's orientation $R$ and a coordination vector $s_{1}=[x_{s1},y_{s1},z_{s1}]^{\top}$.
$s_{1}$ is a known constant vector, and can be initially calculated
given known orientation (e.g., positioning the vehicle $\{\mathcal{B}\}$-frame
at the $\{\mathcal{I}\}$-frame). As such, one obtains $G_{1}=P+Rs_{1}$
or $G_{2}=P-Rs_{1}$. Given 6-axis IMU and two tags, the necessary
3 observations and measurements can be obtained as follows:
\begin{equation}
	\begin{cases}
		a_{m} & \approx-R^{\top}\overrightarrow{\mathtt{g}},\hspace{1em}\text{accelerometer}\\
		s_{1} & =R^{\top}(G_{1}-P)\\
		s_{2} & =\frac{a_{m}\times s_{1}}{||a_{m}\times s_{1}||},\hspace{1em}r_{3}=\frac{-\overrightarrow{\mathtt{g}}\times(G_{1}-P)}{||-\overrightarrow{\mathtt{g}}\times(G_{1}-P)||}
	\end{cases}\label{eq:UWB_IMU_UWB}
\end{equation}
\begin{figure}[h]
	\centering{}\centering\includegraphics[scale=0.75]{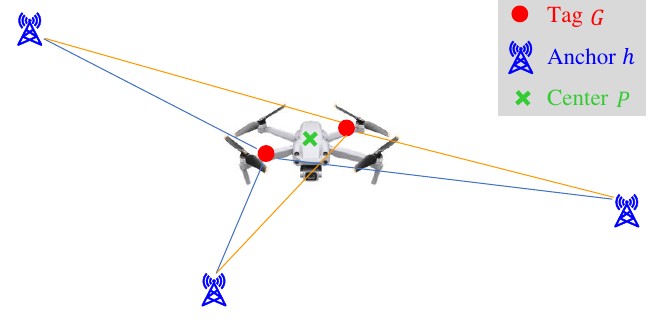}\caption{\label{fig:UWB_Orient} Multiple tag placement for pose (orientation
		+ position) determination.}
\end{figure}

In view of the above discussion, UWB anchors satisfying Assumption
\ref{Assumption:assum_TOA} can position the vehicle. Anchors and
three tags noncollinearly placed on the vehicle are sufficient to
formulate three equations for vehicle coordination analogously to
\eqref{eq:UWB_IMU_UWB} without the need for a magnetometer or an
accelerometer.

\section{Validation \label{sec:UWB_Simulations}}

This section illustrates the robustness of the proposed nonlinear
stochastic filter for inertial navigation based on the fusion of UWB
and IMU sensors. We have used the publicly available real-world drone
flight dataset published by Zhao et al., 2022 \cite{zhao2022uwbData}.
The dataset includes measurements obtained from a drone equipped with
one UWB tag and a 6-axis IMU (see Fig. \ref{fig:UWB_Drone}). 8 fixed
UWB anchors were present during the drone flight (satisfying Assumption
\ref{Assumption:assum_TOA}). Additionally, the dataset supplies ground
truth information (true drone's position and orientation described
with respect to unit-quaternion).
\begin{figure}[h]
	\centering{}\centering\includegraphics[scale=0.22]{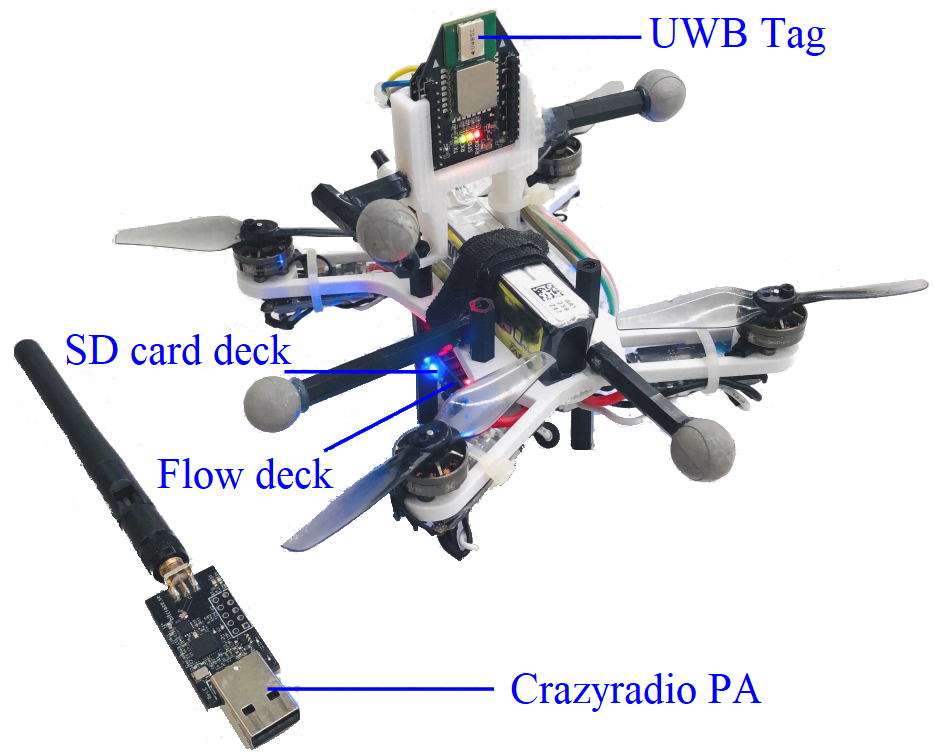}\caption{\label{fig:UWB_Drone} Customized drone with a flight controller \cite{zhao2022uwbData}.}
\end{figure}

The linear velocity is not provided in the dataset. Therefore, a classical
maximum likelihood (ML) method has been utilized to identify the true
linear velocity (for the purpose of comparison) \cite{soderstrom2002discrete,maybeck1982stochastic}.
To test the filter convergence capability against large error initialization,
we initiated the drone flight at its true original position provided
in the dataset $P(0)=[-0.061,1.244,1.506]^{\top}$ and a linear velocity
$V(0)=[-0.4708,0.1308,-0.3363]^{\top}$, while the estimated initial
position and linear velocity were set as $\hat{P}(0)=[-2,-3,0]^{\top}$
and $\hat{V}(0)=[0,0,0]^{\top}$, respectively. The UWB tag is not
positioned at the vehicle's center. As such, in order to adjust the
UWB tag's position to the vehicle's center a translation vector $v_{c}=[-0.012,0.001,0.091]^{\top}m$
\cite{zhao2022uwbData}, the ranging distance is calculated by:
\begin{equation}
	d_{i,j}=||Rv_{c}+P-h_{j}||-||Rv_{c}+P-h_{i}||\label{eq:UWB_dji_simu}
\end{equation}
where $d_{i,j}$ denotes the TDOA range distance, $R$ refers to the
drone's orientation, $P$ stands for the drone's position, and $h_{j}$
denotes the $j$th anchor position (see Section \ref{subsec:TDOA}).
The expression in \eqref{eq:UWB_dji_simu} has been utilized for obtaining
the reconstructed position $P_{y}$ described in \eqref{eq:UWB_Pbar1}.
To include a magnetometer, we set $m_{r}=[-1.3,0,1.5]^{\top}$ and
calculated $m_{m}=R^{\top}m_{r}+n_{m}$ where $n_{m}=\mathcal{N}\left(0,0.2\right)$
refers to a normally distributed random noise vector with a zero mean
and a standard deviation of $0.2$. The design parameters have been
selected as follows: $k_{1}=3$, $k_{v}=3$,\textbf{ }$k_{a}=70$,
$\gamma_{s}=0.1$, $\varepsilon=0.5$, and $k_{\sigma}=0.1$ where
the initial covariance estimate has been set to $\hat{\sigma}(0)=[0,0,0]^{\top}$.
The TDOA of UWB measurements and IMU data were collected at a rate
of 500 Hz. Therefore, to confirm robustness of the proposed filter,
the algorithm was implemented at a lower sampling rate ($100$ Hz)
with $\Delta T=0.01$ sec.

The experimental validation uses ``Const1-Trial'' of the UTIL dataset
\cite{zhao2022uwbData}. Fig. \ref{fig:UWB_Py} illustrates the true
drone's position $P$ marked as a red solid line, the estimated position
$\hat{P}$ plotted as a blue dash line, and the reconstructed position
$P_{y}$ in gray color obtained using $d_{i,j}$ TDOA range distance,
fixed anchors ($h_{i}$ for $i=1,2\ldots,8$), and $v_{c}$ described
in \eqref{eq:UWB_dji_simu}. The reconstructed position $P_{y}$ in
Fig. \ref{fig:UWB_Py} demonstrates a high level of noise attached
to measurements. Despite the high level of measurement uncertainties,
Fig. \ref{fig:UWB_Py} reveals highly accurate estimation of the vehicle's
position in 3D space. Furthermore, Fig. \ref{fig:UWB_Err} illustrates
in blue the successful and rapid error convergence of the proposed
filter in terms of orientation $||\tilde{R}||_{{\rm I}}=\frac{1}{4}{\rm Tr}\{\mathbf{I}_{3}-\hat{R}R^{\top}\}$,
position $||P-\hat{P}||$, and linear velocity $||V-\hat{V}||$ from
large error initialization to the neighborhood of the origin. Additionally,
Fig. \ref{fig:UWB_Err} compares the performance of the proposed stochastic
filter against EKF and UKF. As has been illustrated in Fig. \ref{fig:UWB_Err},  UKF is slower in orientation convergence than the proposed stochastic nonlinear filter. At steady-state, UKF shows more oscillatory behavior when compared to the EKF, and EKF presents more oscillatory behavior than the proposed nonlinear stochastic filter.

\begin{figure}[h]
	\centering{}\centering\includegraphics[scale=0.33]{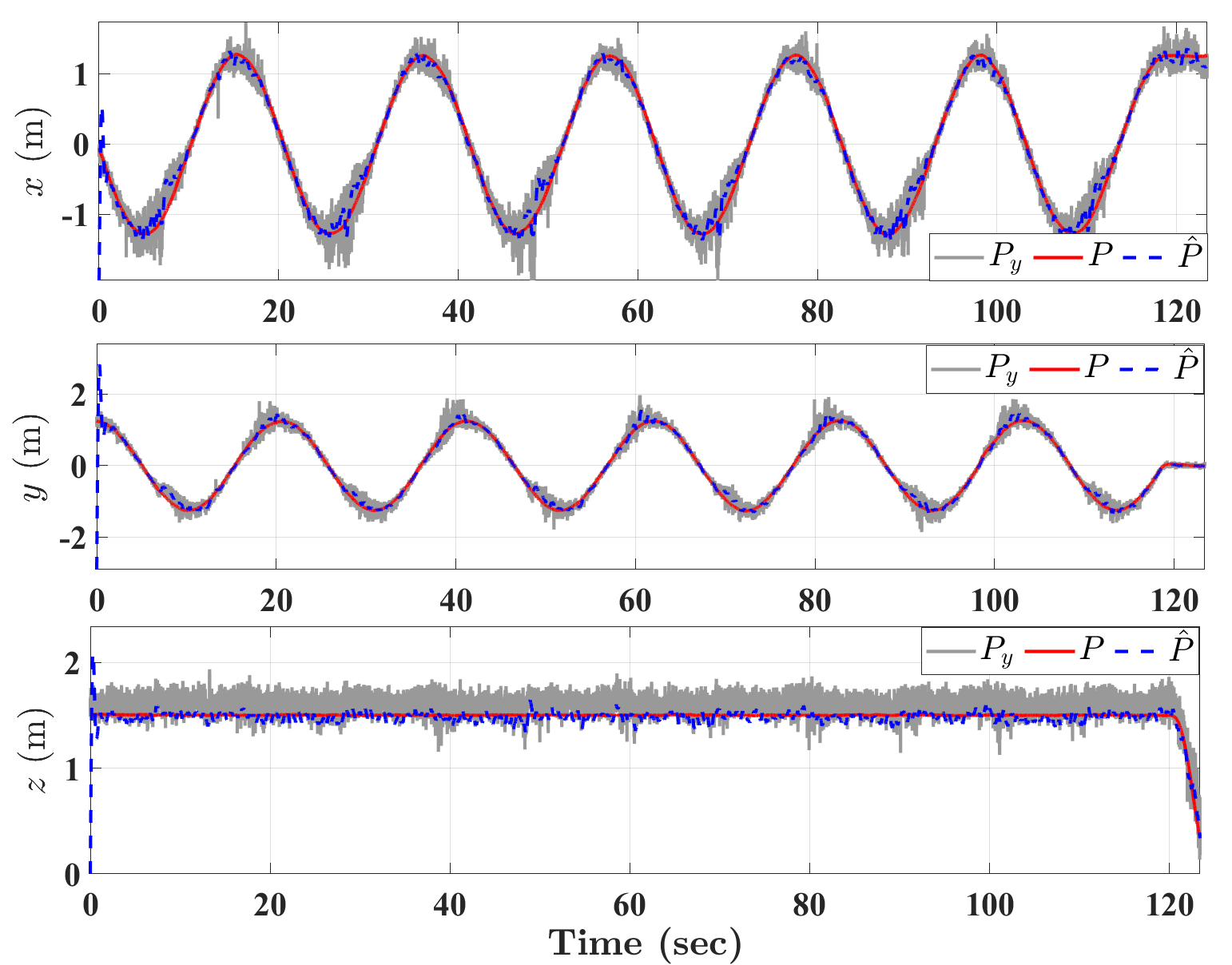}\caption{\label{fig:UWB_Py} Trial Const1 \cite{zhao2022uwbData}: The true
		drone's position $P$ is marked as a red solid line, the estimated
		position $\hat{P}$ (proposed) is plotted as a blue dash line, and
		the reconstructed position $P_{y}$ is presented as a gray solid line.}
\end{figure}

\begin{figure}[h]
	\centering{}\centering\includegraphics[scale=0.33]{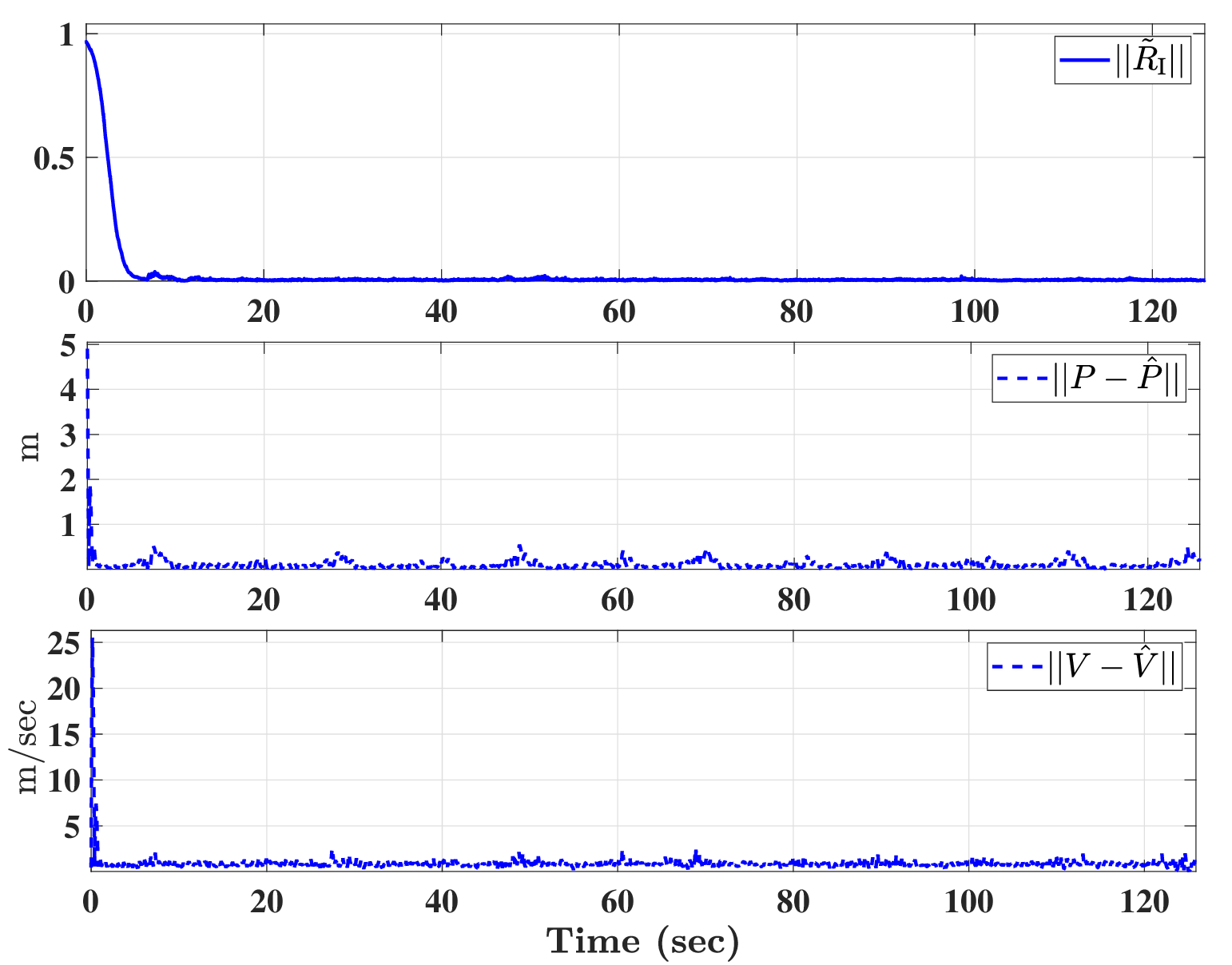}\caption{\label{fig:UWB_Err} Error convergence: Nonlinear stochastic filter
		(proposed) plotted as a blue dash line.}
\end{figure}

\begin{figure}[h]
	\centering{}\textcolor{blue}{\centering\includegraphics[scale=0.3]{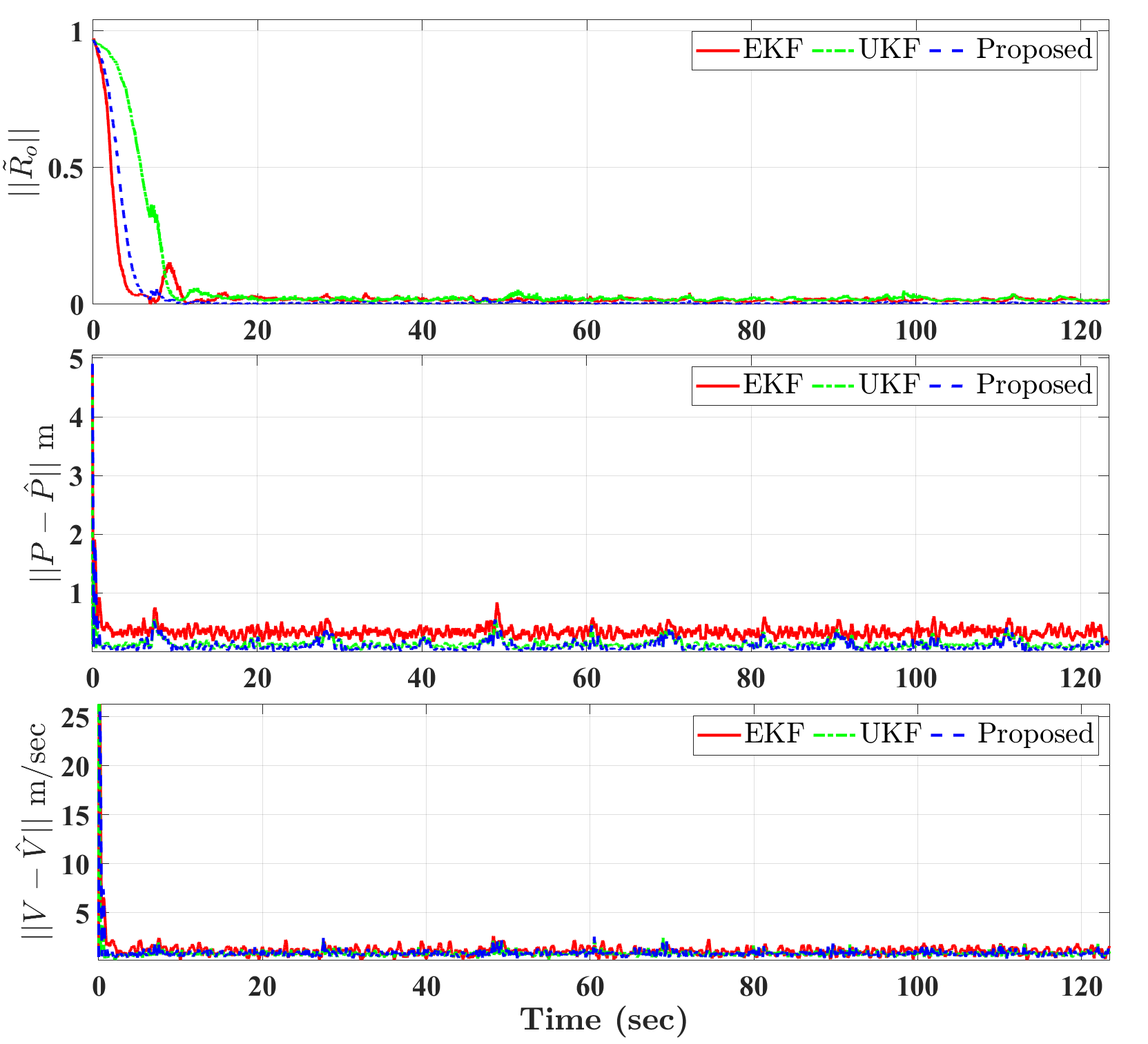}\caption{\label{fig:UWB_Err} Error convergence: Nonlinear stochastic filter
			(proposed) plotted as a blue dash line vs EKF and UKF.}
	}
\end{figure}

\section{Conclusion \label{sec:SE3_Conclusion}}

A novel nonlinear stochastic filter for inertial navigation on the
Lie Group of $\mathbb{SE}_{2}\left(3\right)$ is proposed to estimate
the vehicle's attitude, position, and linear velocity. The proposed
filter utilizes the direct measurements supplied by UWB and IMU achieving
semi-globally uniformly ultimately bounded (SGUUB) stability of the
closed loop error signals. The filter effectively tackles IMU uncertainties
and ensures noise attenuation. The necessary conditions for orientation
and position determination have been outlined considering IMU and
UWB fusion or the exclusive use of UWB technology. The validation
performed using a real-world unmanned aerial vehicle flight dataset
has revealed the capability of the discrete form of the proposed approach
to produce accurate estimates of orientation, position, and linear
velocity.

\section*{Acknowledgment}

The authors would like to thank \textbf{Maria Shaposhnikova} for proofreading
the article.

\section*{Appendix\label{sec:UWB_AppendixA}}
\begin{center}
	\textbf{Quaternion Representation of Navigation Filter}
	\par\end{center}

\noindent Let $Q=[q_{0},q^{\top}]^{\top}\in\mathbb{S}^{3}$ refer
to a unit-quaternion vector with $q_{0}\in\mathbb{R}$ and $q\in\mathbb{R}^{3}$,
and the 3-sphere group $\mathbb{S}^{3}$ be defined by 
\[
\mathbb{S}^{3}=\{\left.Q\in\mathbb{R}^{4}\right|||Q||=\sqrt{q_{0}^{2}+q^{\top}q}=1\}
\]
$\mathbb{SO}\left(3\right)$ can be obtained from unit-quaternion
through the following map $\mathcal{R}_{Q}:\mathbb{S}^{3}\rightarrow\mathbb{SO}\left(3\right)$
\cite{hashim2021_COMP_ENG_PRAC}:
\begin{align}
	\mathcal{R}_{Q} & =(q_{0}^{2}-||q||^{2})\mathbf{I}_{3}+2qq^{\top}+2q_{0}\left[q\right]_{\times}\in\mathbb{SO}\left(3\right)\label{eq:NAV_Append_SO3}
\end{align}
Let $\hat{Q}=[\hat{q}_{0},\hat{q}^{\top}]^{\top}\in\mathbb{S}^{3}$
denote the estimate of $Q=[q_{0},q^{\top}]^{\top}\in\mathbb{S}^{3}$
and the attitude estimate from quaternion estimate be given by
\[
\hat{\mathcal{R}}_{Q}=(\hat{q}_{0}^{2}-||\hat{q}||^{2})\mathbf{I}_{3}+2\hat{q}\hat{q}^{\top}+2\hat{q}_{0}\left[\hat{q}\right]_{\times}\in\mathbb{SO}\left(3\right)
\]
Similar to \eqref{eq:UWB_Py}, let us obtain the vehicle position
as follows:
\begin{equation}
	\begin{cases}
		P_{y}=(A^{\top}A)^{-1}A^{\top}B, & \text{TOA}\\
		\overline{P}_{y}=\left[\begin{array}{c}
			P_{y}\\
			||P-h_{1}||
		\end{array}\right]=(A^{\top}A)^{-1}A^{\top}B, & \text{TDOA}
	\end{cases}\label{eq:UWB_Py-1}
\end{equation}
Consider the covariance adaptation mechanism and the correction factors
using the following set of equations:
\begin{equation}
	\begin{cases}
		E_{r} & =\frac{1}{4}{\rm Tr}\sum_{i=1}^{3}s_{i}(r_{i}r_{i}^{\top}-\hat{\mathcal{R}}_{Q}\hat{v}_{i}v_{i}^{\top}\hat{\mathcal{R}}_{Q}^{\top})\\
		\mathcal{D}_{v} & ={\rm diag}(\sum_{i=1}^{3}s_{i}v_{i}\times\hat{v}_{i})\\
		\dot{\hat{\sigma}} & =\gamma_{\sigma}\frac{E_{r}+2}{8}\exp(E_{r})\mathcal{D}_{{\rm v}}(\sum_{i=1}^{n}s_{i}v_{i}\times\hat{v}_{i})-k_{\sigma}\gamma_{\sigma}\hat{\sigma}\\
		w_{\Omega} & =-\frac{k_{1}}{2}\sum_{i=1}^{n}\hat{\mathcal{R}}_{Q}s_{i}(v_{i}\times\hat{v}_{i})-\frac{1}{8}\frac{E_{r}+2}{E_{r}+1}\hat{\mathcal{R}}_{Q}\mathcal{D}_{{\rm v}}\hat{\sigma}\\
		w_{V} & =-\frac{k_{v}}{\varepsilon}(P_{y}-\hat{P})-[w_{\Omega}]_{\times}\hat{P}\\
		w_{a} & =-k_{a}(P_{y}-\hat{P})-[w_{\Omega}]_{\times}\hat{V}
	\end{cases}\label{eq:UWB_Filter1_Correc-1}
\end{equation}
Consider the navigation stochastic filter kinematics as follows:
\begin{equation}
	\begin{cases}
		\Theta_{m}= & \left[\begin{array}{cc}
			0 & -\Omega_{m}^{\top}\\
			\Omega_{m} & -[\Omega_{m}]_{\times}
		\end{array}\right],\hspace{1em}\Psi=\left[\begin{array}{cc}
			0 & -w_{\Omega}^{\top}\\
			w_{\Omega} & [w_{\Omega}]_{\times}
		\end{array}\right]\\
		\dot{\hat{Q}} & =\frac{1}{2}\Theta_{m}\hat{Q}-\frac{1}{2}\Psi\hat{Q}\\
		\dot{\hat{P}} & =\hat{V}-\left[w_{\Omega}\right]_{\times}\hat{P}-w_{V}\\
		\dot{\hat{V}} & =\hat{\mathcal{R}}a_{m}+\overrightarrow{\mathtt{g}}-\left[w_{\Omega}\right]_{\times}\hat{V}-w_{a}
	\end{cases}\label{eq:UWB_Filter1_Detailed-1}
\end{equation}
\bibliographystyle{IEEEtran}
\bibliography{bib_UWB}

\begin{thebibliography}{10}
\providecommand{\url}[1]{#1}
\csname url@samestyle\endcsname
\providecommand{\newblock}{\relax}
\providecommand{\bibinfo}[2]{#2}
\providecommand{\BIBentrySTDinterwordspacing}{\spaceskip=0pt\relax}
\providecommand{\BIBentryALTinterwordstretchfactor}{4}
\providecommand{\BIBentryALTinterwordspacing}{\spaceskip=\fontdimen2\font plus
\BIBentryALTinterwordstretchfactor\fontdimen3\font minus
  \fontdimen4\font\relax}
\providecommand{\BIBforeignlanguage}[2]{{%
\expandafter\ifx\csname l@#1\endcsname\relax
\typeout{** WARNING: IEEEtran.bst: No hyphenation pattern has been}%
\typeout{** loaded for the language `#1'. Using the pattern for}%
\typeout{** the default language instead.}%
\else
\language=\csname l@#1\endcsname
\fi
#2}}
\providecommand{\BIBdecl}{\relax}
\BIBdecl

\bibitem{beauvisage2021robust}
A.~Beauvisage, K.~Ahiska, and N.~Aouf, ``Robust multispectral visual-inertial
  navigation with visual odometry failure recovery,'' \emph{IEEE Transactions
  on Intelligent Transportation Systems}, vol.~37, no.~5, pp. 1360--1380, 2022.

\bibitem{hashim2021_COMP_ENG_PRAC}
H.~A. Hashim, M.~Abouheaf, and M.~A. Abido, ``Geometric stochastic filter with
  guaranteed performance for autonomous navigation based on {IMU} and feature
  sensor fusion,'' \emph{Control Engineering Practice}, vol. 116, p. 104926,
  2021.

\bibitem{zou2021comparative}
Q.~Zou, Q.~Sun, L.~Chen, B.~Nie, and Q.~Li, ``A comparative analysis of lidar
  slam-based indoor navigation for autonomous vehicles,'' \emph{IEEE
  Transactions on Intelligent Transportation Systems}, vol.~23, no.~7, pp.
  6907--6921, 2022.

\bibitem{hashim2022ExpVTOL}
H.~A. Hashim, ``Exponentially stable observer-based controller for vtol-uavs
  without velocity measurements,'' \emph{International Journal of Control},
  vol.~96, no.~8, pp. 1946--1960, 2023.

\bibitem{liu2022formation}
Z.~Liu, Y.~Li, Y.~Wu, and S.~He, ``Formation control of nonholonomic unmanned
  ground vehicles via unscented kalman filter-based sensor fusion approach,''
  \emph{ISA transactions}, vol. 125, pp. 60--71, 2022.

\bibitem{d2018uav}
E.~D’Amato, M.~Mattei, I.~Notaro, and V.~Scordamaglia, ``Uav sensor fdi in
  duplex attitude estimation architectures using a set-based approach,''
  \emph{IEEE Transactions on Instrumentation and Measurement}, vol.~67, no.~10,
  pp. 2465--2475, 2018.

\bibitem{cheng2014seamless}
J.~Cheng, L.~Yang, Y.~Li, and W.~Zhang, ``Seamless outdoor/indoor navigation
  with wifi/gps aided low cost inertial navigation system,'' \emph{Physical
  Communication}, vol.~13, pp. 31--43, 2014.

\bibitem{hashim2023_ISA}
H.~A. Hashim, A.~E. Eltoukhy, and A.~Odry, ``Observer-based controller for
  vtol-uavs tracking using direct vision-aided inertial navigation
  measurements,'' \emph{ISA transactions}, vol. 137, pp. 133--143, 2023.

\bibitem{hashim2021ACC}
H.~A. {Hashim}, ``Gps-denied navigation: Attitude, position, linear velocity,
  and gravity estimation with nonlinear stochastic observer,'' in \emph{2021
  American Control Conference (ACC)}.\hskip 1em plus 0.5em minus 0.4em\relax
  IEEE, 2021, pp. 1146--1151.

\bibitem{tian2019resetting}
Q.~Tian, I.~Kevin, K.~Wang, and Z.~Salcic, ``A resetting approach for ins and
  uwb sensor fusion using particle filter for pedestrian tracking,'' \emph{IEEE
  Transactions on Instrumentation and Measurement}, vol.~69, no.~8, pp.
  5914--5921, 2020.

\bibitem{xiong2021adaptive}
J.~Xiong and et~al, ``Adaptive hybrid robust filter for multi-sensor relative
  navigation system,'' \emph{IEEE Transactions on Intelligent Transportation
  Systems}, 2022.

\bibitem{dwek2019improving}
N.~Dwek and et~al, ``Improving the accuracy and robustness of ultra-wideband
  localization through sensor fusion and outlier detection,'' \emph{IEEE
  Robotics and Automation Letters}, vol.~5, no.~1, pp. 32--39, 2019.

\bibitem{guo2019ultra}
K.~Guo, X.~Li, and L.~Xie, ``Ultra-wideband and odometry-based cooperative
  relative localization with application to multi-uav formation control,''
  \emph{IEEE Transactions on Cybernetics}, vol.~50, no.~6, pp. 2590--2603,
  2019.

\bibitem{shen2010fundamental}
Y.~Shen, H.~Wymeersch, and M.~Z. Win, ``Fundamental limits of wideband
  localization—part ii: Cooperative networks,'' \emph{IEEE Transactions on
  Information Theory}, vol.~56, no.~10, pp. 4981--5000, 2010.

\bibitem{navratil2022concurrent}
V.~Navratil, J.~Krska, and F.~Vejrazka, ``Concurrent bi-directional tdoa
  positioning in uwb network with free-running clocks,'' \emph{IEEE
  Transactions on Aerospace and Electronic Systems}, 2022.

\bibitem{feng2020kalman}
D.~Feng and et~al, ``Kalman-filter-based integration of imu and uwb for
  high-accuracy indoor positioning and navigation,'' \emph{IEEE Internet of
  Things Journal}, vol.~7, no.~4, pp. 3133--3146, 2020.

\bibitem{wen2020new}
K.~Wen and et~al, ``A new quaternion kalman filter based foot-mounted imu and
  uwb tightly-coupled method for indoor pedestrian navigation,'' \emph{IEEE
  Transactions on Vehicular Technology}, vol.~69, no.~4, pp. 4340--4352, 2020.

\bibitem{strohmeier2018ultra}
M.~Strohmeier, T.~Walter, J.~Rothe, and S.~Montenegro, ``Ultra-wideband based
  pose estimation for small unmanned aerial vehicles,'' \emph{IEEE Access},
  vol.~6, pp. 57\,526--57\,535, 2018.

\bibitem{hashim2018SO3Stochastic}
H.~A. Hashim, L.~J. Brown, and K.~McIsaac, ``Nonlinear stochastic attitude
  filters on the special orthogonal group 3: Ito and stratonovich,'' \emph{IEEE
  Transactions on Systems, Man, and Cybernetics: Systems}, vol.~49, no.~9, pp.
  1853--1865, 2019.

\bibitem{chui2017kalman}
C.~K. Chui, G.~Chen \emph{et~al.}, \emph{Kalman filtering}.\hskip 1em plus
  0.5em minus 0.4em\relax Springer, 2017.

\bibitem{zamani2013minimum}
M.~Zamani, J.~Trumpf, and R.~Mahony, ``Minimum-energy filtering for attitude
  estimation,'' \emph{IEEE Transactions on Automatic Control}, vol.~58, no.~11,
  pp. 2917--2921, 2013.

\bibitem{crassidis2007survey}
J.~L. Crassidis, F.~L. Markley, and Y.~Cheng, ``Survey of nonlinear attitude
  estimation methods,'' \emph{Journal of guidance, control, and dynamics},
  vol.~30, no.~1, pp. 12--28, 2007.

\bibitem{hashim2019SO3Wiley}
H.~A. Hashim, ``Systematic convergence of nonlinear stochastic estimators on
  the special orthogonal group {SO}(3),'' \emph{International Journal of Robust
  and Nonlinear Control}, vol.~30, no.~10, pp. 3848--3870, 2020.

\bibitem{barrau2016invariant}
A.~Barrau and S.~Bonnabel, ``The invariant extended kalman filter as a stable
  observer,'' \emph{IEEE Transactions on Automatic Control}, vol.~62, no.~4,
  pp. 1797--1812, 2016.

\bibitem{hashim2023nonlinear}
H.~A. Hashim, A.~E. Eltoukhy, K.~G. Vamvoudakis, and M.~I. Abouheaf,
  ``Nonlinear deterministic observer for inertial navigation using
  ultra-wideband and imu sensor fusion,'' \emph{2023 IEEE/RSJ International
  Conference on Intelligent Robots and Systems (IROS)}, pp. 1--6, 2023.

\bibitem{hua2016joint}
M.-C. Hua and H.-C. Liu, ``Joint estimation of doa and toa for mb-ofdm uwb
  signals based on frequency-domain processing,'' \emph{IET Radar, Sonar \&
  Navigation}, vol.~10, no.~8, pp. 1337--1346, 2016.

\bibitem{bottigliero2021low}
S.~Bottigliero and et~al, ``A low-cost indoor real-time locating system based
  on tdoa estimation of uwb pulse sequences,'' \emph{IEEE Transactions on
  Instrumentation and Measurement}, vol.~70, pp. 1--11, 2021.

\bibitem{sidorenko2020error}
J.~Sidorenko, V.~Schatz, N.~Scherer-Negenborn, M.~Arens, and U.~Hugentobler,
  ``Error corrections for ultrawideband ranging,'' \emph{IEEE Transactions on
  Instrumentation and Measurement}, vol.~69, no.~11, pp. 9037--9047, 2020.

\bibitem{markley2006attitude}
F.~L. Markley, ``Attitude filtering on so (3),'' \emph{The Journal of the
  Astronautical Sciences}, vol.~54, no.~3, pp. 391--413, 2006.

\bibitem{tong2011observer}
S.~Tong, Y.~Li, Y.~Li, and Y.~Liu, ``Observer-based adaptive fuzzy backstepping
  control for a class of stochastic nonlinear strict-feedback systems,''
  \emph{IEEE Transactions on Systems, Man, and Cybernetics, Part B
  (Cybernetics)}, vol.~41, no.~6, pp. 1693--1704, 2011.

\bibitem{ito1984lectures}
K.~Ito and K.~M. Rao, \emph{Lectures on stochastic processes}.\hskip 1em plus
  0.5em minus 0.4em\relax Tata institute of fundamental research, 1984,
  vol.~24.

\bibitem{jazwinski2007stochastic}
A.~H. Jazwinski, \emph{Stochastic processes and filtering theory}.\hskip 1em
  plus 0.5em minus 0.4em\relax Courier Corporation, 2007.

\bibitem{ji2006adaptive}
H.-B. Ji and H.-S. Xi, ``Adaptive output-feedback tracking of stochastic
  nonlinear systems,'' \emph{IEEE Transactions on Automatic Control}, vol.~51,
  no.~2, pp. 355--360, 2006.

\bibitem{deng2001stabilization}
H.~Deng, M.~Krstic, and R.~J. Williams, ``Stabilization of stochastic nonlinear
  systems driven by noise of unknown covariance,'' \emph{IEEE Transactions on
  Automatic Control}, vol.~46, no.~8, pp. 1237--1253, 2001.

\bibitem{jiang2020uwb}
C.~Jiang and et~al, ``Uwb nlos/los classification using deep learning method,''
  \emph{IEEE Communications Letters}, vol.~24, no.~10, pp. 2226--2230, 2020.

\bibitem{zhao2022uwbData}
W.~Zhao, A.~Goudar, X.~Qiao, and A.~P. Schoellig, ``Util: An ultra-wideband
  time-difference-of-arrival indoor localization dataset,'' in
  \emph{International Journal of Robotics Research (IJRR)}, 2022.

\bibitem{soderstrom2002discrete}
T.~Soderstrom, \emph{Discrete-time stochastic systems: estimation and
  control}.\hskip 1em plus 0.5em minus 0.4em\relax Springer Science \& Business
  Media, 2002.

\bibitem{maybeck1982stochastic}
P.~S. Maybeck, \emph{Stochastic models, estimation, and control}.\hskip 1em
  plus 0.5em minus 0.4em\relax Academic press, 1982.

\end{thebibliography}
\end{document}